\def\bbc{{\Bbb C}}
\def\bbr{{\Bbb R}}
\def\bbz{{\Bbb Z}}
\def\tr{\mathrm{tr\,}}
\def\re{\mathrm{Re\,}}
\def\im{\mathrm{Im\,}}
\def\rmd{\mathrm{d\,}}
\def\rmi{\mathrm{i}}
\def\rme{\mathrm{e}}
\def\openone{\leavevmode\hbox{\small1\kern-3.3pt\normalsize1}}
\def\diag{\mbox{diag\,}}
\newtheorem{prop}{Proposition}
\newtheorem{theo}{Theorem}
\newtheorem{corol}{Corollary}
\newtheorem{rem}{Remark}
\begin{document}
\title{A Generic Nonlinear Evolution Equation\\ of Magnetic Type II. Particular Solutions}
\author{T. Valchev\\
\small Institute of Mathematics and Informatics\\
\small Bulgarian Academy of Sciences\\
\small Acad. Georgi Bonchev Str., 1113 Sofia, Bulgaria\\
\small E-mail: tiv@math.bas.bg}
\date{}
\maketitle
\begin{abstract}
We consider a matrix nonlinear partial differential equation that generalizes Heisenberg ferromagnet equation.
This generalized Heisenberg ferromagnet equation is completely integrable with a linear bundle Lax pair related
to the pseudo-unitary algebra. This allows us to explicitly derive particular solutions by using dressing technique.
We shall discuss two classes of solutions over constant background: soliton-like solutions and quasi-rational solutions.
Both classes have their analogues in the case of the Heisenberg ferromagnet equation related to the same Lie algebra.
\end{abstract}

\section{Introduction}\label{intro}

A classical integrable model \cite{BoPo90,forkul,book,blue-bible} is given by Heisenberg ferromagnet equation
\begin{equation}
\rmi S_t = \frac{1}{2}[S,S_{xx}] ,\qquad S^2 = I.
\label{hf}\end{equation}
Above, the function $S$ takes values in a simple complex matrix Lie algebra $\mathfrak{g}$, brackets mean commutator,
$I$ stands for the unit matrix, "$\rmi$" is the imaginary unit and the subscripts denote differentiation. In the simplest case when $\mathfrak{g} = \mathfrak{su}(2)$ (\ref{hf}) is known to describe the dynamics of a spin chain in continuous limit.

Various generalizations or analogues to the equation (\ref{hf}) have been proposed in literature, see 
\cite{golsok,ishim,mmnl,pliska2017} for some representative examples. The present paper is dedicated to the following
matrix nonlinear evolution equation (NEE)
\begin{equation}
\rmi S_t = a [S,S_{xx}] + \frac{3a}{2} \left(S^2S_{x}S\right)_x + b \left[S^2, S_{x}\right]_x ,
\qquad a,b\in\bbc,
\label{ghf}	
\end{equation}
that was introduced in \cite{bgsiam} as a natural candidate to generalize (\ref{hf}). Above, $S(x,t)$ belongs to
the Lie algebra $\mathfrak{sl}(n,\bbc)$ and fulfills the constraint
\[[S(x,t)]^3 = S(x,t) .\]
Like the Heisenberg ferromagnet equation, the NEE (\ref{ghf}) admits zero curvature representation with a Lax pair of the form:	
\begin{eqnarray*}
L(\lambda) & = & \rmi\partial_x - \lambda S,\qquad A(\lambda) = \rmi\partial_t + \lambda A_1 + \lambda^2 A_2,\\
A_2 & = &  a S + b\left(S^2 - \frac{2r}{n}I\right),\qquad r = \frac{\tr(S^2)}{2}\, ,\\
A_1 & = & \rmi a [S,S_{x}] + \frac{3\rmi a}{2}S^2S_{x}S + \rmi\, b \left[S^2, S_{x}\right],
\end{eqnarray*}
where $\lambda\in\bbc$ is spectral parameter.

It is not hard to see that the equation (\ref{ghf}) leads to (\ref{hf}) if $S(x,t)$ is a non-singular matrix, see \cite{bgsiam}.
On the other hand, if we impose a $\bbz_2$ reduction in Mikhailov's sense \cite{mikh2} that connects the Lax pair
above to a symmetric space of the series $\mathbf{A.III}$, we obtain the NEE 
\[\rmi S_t =  b \left[S^2, S_{x}\right]_x ,\qquad b\in\bbc , \]
that was studied in a series of papers, e.g. see \cite{side9,pliska2017}.

Our purpose in this paper is to derive special solutions of a pseudo-Hermitian reduction of the generalized Heisenberg
ferromagnet equation (\ref{ghf}), i.e., we shall assume that $S$ fulfills the symmetry condition
\[\mathcal{E}S^{\dag}(x,t)\mathcal{E} = S(x,t),\]
where $\mathcal{E}$ is a diagonal matrix with $1$ or $-1$ on its diagonal and $\dag$ stands for Hermitian conjugation.
As a result of this, the constants $a$ and $b$ appearing in (\ref{ghf}) will be some real numbers. This is the most important case
in terms of potential applications. The method of integration we shall apply is Zakharov-Shabat's dressing method. Dressing
method is equally well suited to deal with NEEs whose Lax pairs are related to simple Lie algebras regardless of their
rank or whether or not some additional algebraic constraints are imposed on the Lax operators. By using dressing factors
that are meromorphic functions with simple poles in the spectral parameter, we shall demonstrate how one can construct two
classes of trivial background solutions: soliton-like solutions and quasi-rational solutions. The soliton solutions correspond
to complex poles in generic position and tend exponentially fast to some constants as $x\to\pm\infty$. The second class consists
of solutions that are obtained by using real pole dressing factors. This introduces certain "degeneracy" in the whole procedure
which leads to solutions that can be represented as the ratio of quasi-polynomials, i.e., the product of polynomials with oscillating exponential functions. Common features of both types of particular solutions are that they are not traveling waves
and may have singularities.

The paper is organized as follows. Next section is preliminary and discusses the general concept and algorithm of
dressing method as applied to linear bundle Lax pairs in pole gauge. Therein we shall present most basic facts to be used
further in text, thus giving more or less a self-contained exposition on the subject. Section \ref{solutions} contains our
main results and it is divided into two subsections. Subsection \ref{solitons} is dedicated to solutions of the soliton type
while Subsection \ref{quasi-rational} contains quasi-rational solutions. Apart of the above-mentioned types of solutions to
the generalized Heisenberg ferromagnet equation (\ref{ghf}), we also pay attention to their counterparts for the special case
of the equation (\ref{hf}). Since the latter is a rather important case by itself, we present the corresponding results in the form
of separate theorems. The last section contains further discussion and our concluding remarks.

\section{General Remarks on Dressing Method}
\label{prelim}

In this section we shall explain the idea that underlies dressing method and demonstrate how it can be applied to the generalized
Heisenberg ferromagnet equation (\ref{ghf}) in order to generate special solutions, see \cite{book, zakh-mikh, ZS} for more detailed explanations. 

Let us consider a Lax pair of the general form 
\begin{eqnarray}
L^{(0)}(\lambda) &=& \rmi\partial_x + U^{(0)}(x,t,\lambda),\label{lax01}\\
A^{(0)}(\lambda) &=& \rmi\partial_t + V^{(0)}(x,t,\lambda),\label{lax02}
\end{eqnarray}
where $U^{(0)}(x,t,\lambda)$ and $V^{(0)}(x,t,\lambda)$ are some polynomials in the spectral parameter $\lambda$
with coefficients in the Lie algebra $\mathfrak{sl}(n,\bbc)$, i.e., we can write down
\begin{eqnarray*}
U^{(0)}(x,t,\lambda) &=& \sum^{N}_{j=0}U^{(0)}_j(x,t) \lambda^j,\qquad U^{(0)}_j(x,t)\in\mathfrak{sl}(n,\bbc),\\
V^{(0)}(x,t,\lambda) &=& \sum^{M}_{k=0}V^{(0)}_k(x,t)\lambda^k,\qquad V^{(0)}_k(x,t)\in\mathfrak{sl}(n,\bbc),
\qquad N\leq M.
\end{eqnarray*}
We assume that $U^{(0)}_j$, $j=0,\ldots, N$ and $V^{(0)}_k$, $k=0,\ldots,M$ are some known smooth functions in $x$ and $t$
defined almost everywhere in $\bbr^2$. We shall call (\ref{lax01}) and (\ref{lax02}) bare Lax operators and shall require
that the linear problems 
\begin{eqnarray}
L^{(0)}(\lambda)\Psi^{(0)}(x,t,\lambda) &=& 0,\label{bareaux1}\\
A^{(0)}(\lambda)\Psi^{(0)}(x,t,\lambda) &=& \Psi^{(0)}(x,t,\lambda)C(\lambda)\label{bareaux2}
\end{eqnarray} 
admit a common matrix solution $\Psi^{(0)}(x,t,\lambda)$, $\det\Psi^{(0)}(x,t,\lambda)\neq 0$ for some non-singular matrix
$C(\lambda)$. Obviously, this is possible if the operators $L^{(0)}(\lambda)$ and $A^{(0)}(\lambda)$ commute.
This compatibility condition of bare linear problems allows one to express the coefficients $V^{(0)}_k$, $k=0,\ldots,M$
through $U^{(0)}_j$, $j=0,\ldots, N$ and leads to some NEE for the latter.

Let us now construct another matrix function
\begin{equation}
\Psi^{(1)}(x,t,\lambda) = \mathcal{G}(x,t,\lambda)\Psi^{(0)}(x,t,\lambda),
\label{dres}\end{equation}
where the multiplier $\mathcal{G}(x,t,\lambda)$, $\det\mathcal{G}(x,t,\lambda) \neq 0$ is called dressing factor. Then the
Lax operators $L^{(0)}(\lambda)$ and $A^{(0)}(\lambda)$ are transformed into
\begin{equation}
\begin{split}
L^{(1)}(\lambda) &= \mathcal{G}(x,t,\lambda)L^{(0)}(\lambda)\left[\mathcal{G}(x,t,\lambda)\right]^{-1},\\
A^{(1)}(\lambda) &= \mathcal{G}(x,t,\lambda)A^{(0)}(\lambda)\left[\mathcal{G}(x,t,\lambda)\right]^{-1}.
\end{split}
\label{laxdres}
\end{equation} 
We recall that the resolvent operator $R^{(0)}(\lambda)$ of the bare scattering operator $L^{(0)}(\lambda)$ is defined through
the equality 
\begin{equation}
L^{(0)}(\lambda)R^{(0)}(\lambda) = I,
\label{bareresolv}
\end{equation}
where $I$ is the unit matrix. Similarly, for the resolvent operator of $L^{(1)}(\lambda)$ we have
\begin{equation}
L^{(1)}(\lambda)R^{(1)}(\lambda) = I.
\label{dresresolv}\end{equation}
In view of (\ref{laxdres}), (\ref{bareresolv}) and (\ref{dresresolv}) we immediately see that the following proposition
holds.
\begin{prop}
The resolvent operators of bare and dressed scattering operators are interrelated through:
\[R^{(1)}(\lambda) = \mathcal{G}(x,t,\lambda)R^{(0)}(\lambda)\left[\mathcal{G}(x,t,\lambda)\right]^{-1}.\]
\end{prop}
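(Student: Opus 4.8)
The plan is to verify the claimed identity directly from the three displayed relations (\ref{laxdres}), (\ref{bareresolv}), and (\ref{dresresolv}), which together pin down $R^{(1)}(\lambda)$ uniquely once we know that $L^{(1)}(\lambda)$ is invertible. First I would take the defining equation of the bare resolvent, $L^{(0)}(\lambda)R^{(0)}(\lambda) = I$, and conjugate it by the dressing factor $\mathcal{G}(x,t,\lambda)$: multiplying on the left by $\mathcal{G}$ and inserting $I = [\mathcal{G}]^{-1}\mathcal{G}$ between $L^{(0)}$ and $R^{(0)}$ yields
\[
\mathcal{G} L^{(0)} [\mathcal{G}]^{-1}\,\mathcal{G} R^{(0)} [\mathcal{G}]^{-1} = \mathcal{G}\, I\, [\mathcal{G}]^{-1} = I .
\]
By the first line of (\ref{laxdres}) the first factor on the left is exactly $L^{(1)}(\lambda)$, so this reads $L^{(1)}(\lambda)\bigl(\mathcal{G} R^{(0)} [\mathcal{G}]^{-1}\bigr) = I$. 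Comparing with (\ref{dresresolv}), the operator $\mathcal{G} R^{(0)} [\mathcal{G}]^{-1}$ is a right inverse of $L^{(1)}(\lambda)$; since the resolvent is the (two-sided) inverse, uniqueness of the inverse gives $R^{(1)}(\lambda) = \mathcal{G}(x,t,\lambda)R^{(0)}(\lambda)[\mathcal{G}(x,t,\lambda)]^{-1}$, which is the assertion.

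A small amount of care is warranted on the operator-theoretic side, and that is the step I expect to be the only genuine obstacle: $L^{(0)}$ and $L^{(1)}$ are differential operators, not matrices, so "conjugation by $\mathcal{G}$" and "inverse" must be read in the appropriate operator sense, and one should note that $\mathcal{G}$ acts as a multiplication operator that does not commute with $\partial_x$. The manipulation above is nonetheless valid because it only uses associativity of operator composition and the fact that $\mathcal{G}[\mathcal{G}]^{-1} = [\mathcal{G}]^{-1}\mathcal{G} = I$ as multiplication operators; no commutation between $\mathcal{G}$ and $\partial_x$ is needed, since the $\partial_x$ is carried along inside $L^{(0)}$ and $L^{(1)}$ throughout. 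I would also remark that, just as $\mathcal{G} R^{(0)}[\mathcal{G}]^{-1}$ is a right inverse, the same computation performed with $R^{(0)} L^{(0)} = I$ shows it is a left inverse as well, so the identification with $R^{(1)}$ is unambiguous and no separate appeal to uniqueness beyond the standard one is required.

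Finally, I would phrase the argument as a one-line chain of equalities, displaying
\[
L^{(1)}(\lambda)\,\mathcal{G}(x,t,\lambda)R^{(0)}(\lambda)[\mathcal{G}(x,t,\lambda)]^{-1}
 = \mathcal{G}(x,t,\lambda)L^{(0)}(\lambda)R^{(0)}(\lambda)[\mathcal{G}(x,t,\lambda)]^{-1}
 = \mathcal{G}(x,t,\lambda)[\mathcal{G}(x,t,\lambda)]^{-1} = I ,
\]
and then invoking (\ref{dresresolv}) together with invertibility of $L^{(1)}(\lambda)$ to conclude. This keeps the proof at the level of formal manipulation the surrounding text already operates at, matching the remark preceding the proposition that the result "immediately" follows from (\ref{laxdres}), (\ref{bareresolv}) and (\ref{dresresolv}).
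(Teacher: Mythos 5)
Your argument is correct and is exactly the computation the paper has in mind: it states that the proposition follows "in view of (\ref{laxdres}), (\ref{bareresolv}) and (\ref{dresresolv})" without writing out a proof, and your chain $L^{(1)}\mathcal{G}R^{(0)}\mathcal{G}^{-1} = \mathcal{G}L^{(0)}R^{(0)}\mathcal{G}^{-1} = I$ is precisely what that remark alludes to. No discrepancy in approach.
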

The domain of dressed resolvent operator is determined by the domains of bare resolvent, dressing factor and
its inverse. Clearly, dressing factor and its inverse may introduce singularities to $R^{(1)}(\lambda)$, which coincide
with discrete eigenvalues of $L^{(1)}(\lambda)$. 
\begin{corol}
The discrete eigenvalues of dressed scattering operator belong to either the discrete spectrum of bare scattering
operator or the set of singularities of dressing factor or the set of singularities of its inverse.
\end{corol}

For the gauge transform (\ref{dres}) to be of some use in finding special solutions, we shall require that the
dressed matrix function $\Psi^{(1)}$ solves the pair of linear problems
\begin{eqnarray}
L^{(1)}(\lambda)\Psi^{(1)}(x,t,\lambda) &=& 0,\label{dresaux1}\\
A^{(1)}(\lambda)\Psi^{(1)}(x,t,\lambda) &=& \Psi^{(1)}(x,t,\lambda)C(\lambda)\label{dresaux2}
\end{eqnarray} 
for $L^{(1)}(\lambda)$ and $A^{(1)}(\lambda)$ being of the form
\begin{eqnarray}
L^{(1)}(\lambda) &=& \rmi\partial_x + U^{(1)}(x,t,\lambda),\qquad U^{(1)}(x,t,\lambda) = \sum^{N}_{j=0}U^{(1)}_j(x,t) \lambda^j, \label{lax11}\\
A^{(1)}(\lambda) &=& \rmi\partial_t + V^{(1)}(x,t,\lambda),\qquad V^{(1)}(x,t,\lambda) = \sum^{M}_{k=0}V^{(1)}_k(x,t)\lambda^k. \label{lax12}
\end{eqnarray}
$U^{(1)}_j$, $j=0,\ldots, N$ and $V^{(1)}_k$, $k=0,\ldots,M$ are unknown smooth functions defined almost everywhere in $\bbr^2$ and taking values in $\mathfrak{sl}(n,\bbc)$. Similarly to the case of bare linear operators, we have that
\[\left[L^{(1)}(\lambda), A^{(1)}(\lambda)\right] = 0.\]
Clearly, the compatibility of dressed linear problems means that $U^{(1)}_j$, $j=0,\ldots, N$ are solutions to the
same NEE as $U^{(0)}_j$, $j=0,\ldots, N$.

For the linear problems (\ref{bareaux1}), (\ref{bareaux2}), (\ref{dresaux1}) and (\ref{dresaux2}) to hold simultaneously,
$\mathcal{G}$ must obey certain conditions.
\begin{prop}
Dressing factor satisfies the linear partial differential equations 
\begin{eqnarray}
&&\rmi\partial_x\mathcal{G} + U^{(1)}\mathcal{G} - \mathcal{G}U^{(0)} = 0,\label{facpde1}\\
&&\rmi\partial_t\mathcal{G} + V^{(1)}\mathcal{G} - \mathcal{G}V^{(0)} = 0.\label{facpde2}
\end{eqnarray}
\end{prop}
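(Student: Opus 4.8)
The plan is to derive the two PDEs directly from the defining relations of the dressed eigenfunction. First I would substitute the dressing ansatz $\Psi^{(1)} = \mathcal{G}\Psi^{(0)}$ into the dressed $x$-problem \eqref{dresaux1}. Using $L^{(1)}(\lambda) = \rmi\partial_x + U^{(1)}$, the left-hand side becomes $\rmi(\partial_x\mathcal{G})\Psi^{(0)} + \rmi\mathcal{G}(\partial_x\Psi^{(0)}) + U^{(1)}\mathcal{G}\Psi^{(0)}$. The middle term I would rewrite by invoking the bare $x$-problem \eqref{bareaux1}, which gives $\rmi\partial_x\Psi^{(0)} = -U^{(0)}\Psi^{(0)}$, so that term equals $-\mathcal{G}U^{(0)}\Psi^{(0)}$. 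Collecting everything yields $\bigl(\rmi\partial_x\mathcal{G} + U^{(1)}\mathcal{G} - \mathcal{G}U^{(0)}\bigr)\Psi^{(0)} = 0$.

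Since $\Psi^{(0)}$ is assumed invertible ($\det\Psi^{(0)}\neq 0$), I can cancel it on the right, obtaining precisely \eqref{facpde1}. For \eqref{facpde2} the argument is structurally identical but one must handle the inhomogeneous term $\Psi C(\lambda)$ on the right-hand side of \eqref{bareaux2} and \eqref{dresaux2}. Plugging the ansatz into \eqref{dresaux2} gives $\rmi(\partial_t\mathcal{G})\Psi^{(0)} + \rmi\mathcal{G}(\partial_t\Psi^{(0)}) + V^{(1)}\mathcal{G}\Psi^{(0)} = \mathcal{G}\Psi^{(0)}C(\lambda)$; using \eqref{bareaux2} in the form $\rmi\partial_t\Psi^{(0)} = \Psi^{(0)}C(\lambda) - V^{(0)}\Psi^{(0)}$, the term $\rmi\mathcal{G}\partial_t\Psi^{(0)}$ contributes $\mathcal{G}\Psi^{(0)}C(\lambda) - \mathcal{G}V^{(0)}\Psi^{(0)}$. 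The $\mathcal{G}\Psi^{(0)}C(\lambda)$ pieces cancel against the right-hand side, leaving $\bigl(\rmi\partial_t\mathcal{G} + V^{(1)}\mathcal{G} - \mathcal{G}V^{(0)}\bigr)\Psi^{(0)} = 0$, and again cancelling the invertible $\Psi^{(0)}$ gives \eqref{facpde2}.

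There is no real obstacle here — the statement is essentially a consequence of compatibility and the chain rule. The only point requiring a word of care is the right-cancellation of $\Psi^{(0)}$: it is legitimate because $\Psi^{(0)}(x,t,\lambda)$ is nondegenerate for generic $\lambda$ and the bracketed matrix expression is independent of the choice of fundamental solution, so the identity, holding on an open dense set of $\lambda$, extends by continuity. One may also note that \eqref{facpde1}--\eqref{facpde2} are exactly the statement that $L^{(1)} = \mathcal{G}L^{(0)}\mathcal{G}^{-1}$ and $A^{(1)} = \mathcal{G}A^{(0)}\mathcal{G}^{-1}$ as differential operators, which is consistent with \eqref{laxdres}; indeed an alternative, equally short route is to expand the operator conjugation in \eqref{laxdres} and match it against the postulated polynomial form \eqref{lax11}--\eqref{lax12} of $U^{(1)}, V^{(1)}$, which reproduces the same two equations.
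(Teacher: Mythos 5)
Your proof is correct and is exactly the computation the paper's one-line proof declares ``straightforward'' from \eqref{bareaux1}--\eqref{bareaux2}, \eqref{dresaux1}--\eqref{dresaux2}, the operator forms \eqref{lax01}--\eqref{lax02}, \eqref{lax11}--\eqref{lax12} and the ansatz \eqref{dres}: substitute $\Psi^{(1)}=\mathcal{G}\Psi^{(0)}$, eliminate $\partial_x\Psi^{(0)}$ and $\partial_t\Psi^{(0)}$ via the bare problems (with the $C(\lambda)$ terms cancelling), and right-cancel the invertible $\Psi^{(0)}$. No discrepancies with the paper's intended argument.
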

\begin{proof}
It is straightforward from equations (\ref{bareaux1}), (\ref{bareaux2}), (\ref{dresaux1}), (\ref{dresaux2}),
the form Lax operators (\ref{lax01}), (\ref{lax02}), (\ref{lax11}), (\ref{lax12}) and (\ref{dres}).
\end{proof}
Equations (\ref{facpde1}) and (\ref{facpde2}) do not uniquely determine dressing factor --- one should specify the
$\lambda$-dependence in $\mathcal{G}$. After picking up an appropriate ansatz for dressing factor, (\ref{facpde1})
and (\ref{facpde2}) allow one to find new solutions to the NEE from known ones.

Further we shall demonstrate how dressing technique can be applied to linear bundles in pole gauge, i.e.,
we shall restrict ourselves with the case when
\begin{equation}
\begin{split}
U^{(\sigma)}(x,t,\lambda) & = - \lambda S^{(\sigma)}(x,t),\qquad \left[S^{(\sigma)}\right]^3 = S^{(\sigma)},
\qquad \sigma  = 0,1 \ , \\
V^{(\sigma)}(x,t,\lambda) & = \sum^{M}_{k=1}V^{(\sigma)}_{k}(x,t)\lambda^k\, .
\end{split}
\label{linbunpol}
\end{equation}
Moreover, we shall impose the following additional conditions on $S^{(\sigma)}$.
\begin{enumerate}
\item $S^{(0)}(x,t)$ and $S^{(1)}(x,t)$ fulfill the pseudo-Hermiticity condition:
\begin{equation}
\mathcal{E}\left[S^{(\sigma)}(x,t)\right]^{\dag}\mathcal{E} = S^{(\sigma)}(x,t).
\label{pseudoherm}
\end{equation}
Above, $\dag$ stands for Hermitian conjugation and $\mathcal{E}=\diag(\varepsilon_1,\varepsilon_2,\ldots, \varepsilon_n)$, $\varepsilon^2_j = 1$, $j=1,2,\ldots, n$.

\item $S^{(0)}(x,t)$ and $S^{(1)}(x,t)$ satisfy the boundary condition:
\begin{equation}
\lim_{x\to\pm\infty} S^{(\sigma)}(x,t) = \Sigma,\qquad \Sigma = \left(\begin{array}{ccc}
I_r & 0 & 0 \\ 0 & 0 & 0 \\ 0 & 0 & - I_r 
\end{array}\right),
\label{boundcon}\end{equation}
where $I_r$ is the unit $r\times r$-matrix.

\item For dressing factor we require that
\[\lim_{|\lambda|\to\infty}\mathcal{G}(x,t,\lambda) = \mathcal{G}_{\infty}(x,t) < \infty .\]
\end{enumerate}

Then it turns out $S^{(0)}(x,t)$ and $S^{(1)}(x,t)$ are interrelated in a rather simple way, as stated below.
\begin{prop}
The coefficients $S^{(0)}(x,t)$ and $S^{(1)}(x,t)$ of the Lax operators $L^{(0)}(\lambda)$ and $L^{(1)}(\lambda)$ obey
the equation
\begin{equation}
S^{(1)}(x,t) = \mathcal{G}_{\infty}(x,t)S^{(0)}(x,t)\left[\mathcal{G}_{\infty}(x,t)\right]^{-1}.
\label{s1s0rel}\end{equation}
\end{prop}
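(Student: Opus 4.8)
The plan is to read off the relation (\ref{s1s0rel}) directly from the first of the two linear PDEs satisfied by the dressing factor, equation (\ref{facpde1}), by letting $|\lambda|\to\infty$. Substituting the pole-gauge form (\ref{linbunpol}), $U^{(\sigma)}=-\lambda S^{(\sigma)}$, into (\ref{facpde1}) gives
\begin{equation*}
\rmi\partial_x\mathcal{G}(x,t,\lambda) - \lambda\left[S^{(1)}(x,t)\mathcal{G}(x,t,\lambda) - \mathcal{G}(x,t,\lambda)S^{(0)}(x,t)\right] = 0 .
\end{equation*}
Dividing by $\lambda$ and rearranging, one obtains
\begin{equation*}
S^{(1)}\mathcal{G} - \mathcal{G}S^{(0)} = \frac{\rmi}{\lambda}\,\partial_x\mathcal{G} .
\end{equation*}

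Next I would take $|\lambda|\to\infty$ in this identity. By condition 3 the left-hand side converges to $S^{(1)}\mathcal{G}_{\infty} - \mathcal{G}_{\infty}S^{(0)}$. For the right-hand side the claim is that $\partial_x\mathcal{G}$ stays bounded as $|\lambda|\to\infty$, so that $\lambda^{-1}\partial_x\mathcal{G}\to 0$. This is the point where the explicit structure of $\mathcal{G}$ is used: for a dressing factor that is rational in $\lambda$ (meromorphic with finitely many simple poles) and with a finite limit at infinity, one has an asymptotic expansion $\mathcal{G}=\mathcal{G}_{\infty}+\lambda^{-1}\mathcal{G}^{(1)}_{\infty}+O(\lambda^{-2})$ that may be differentiated in $x$ term by term, whence $\partial_x\mathcal{G}=\partial_x\mathcal{G}_{\infty}+O(\lambda^{-1})$ is bounded uniformly in $\lambda$ on compact $(x,t)$-sets. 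Passing to the limit therefore yields $S^{(1)}\mathcal{G}_{\infty} - \mathcal{G}_{\infty}S^{(0)} = 0$.

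Finally, since $\det\mathcal{G}(x,t,\lambda)\neq 0$ for every $\lambda$, the limit matrix $\mathcal{G}_{\infty}$ is itself invertible (one may include $\det\mathcal{G}_{\infty}\neq 0$ in condition 3, or obtain it by continuity in $\lambda$), so multiplying $S^{(1)}\mathcal{G}_{\infty}=\mathcal{G}_{\infty}S^{(0)}$ on the right by $\mathcal{G}_{\infty}^{-1}$ gives exactly (\ref{s1s0rel}). I expect the only genuine subtlety to be the interchange of the $x$-derivative with the limit $|\lambda|\to\infty$, i.e.\ the uniform control of $\partial_x\mathcal{G}$ in $\lambda$; this becomes immediate once the concrete ansatz for the dressing factor is fixed in Section \ref{solutions}. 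As a consistency check one could run the same argument on (\ref{facpde2}), which must lead to the same relation (\ref{s1s0rel}).
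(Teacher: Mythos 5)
Your proposal is correct and follows essentially the same route as the paper: substitute the pole-gauge ansatz into (\ref{facpde1}), divide by $\lambda$, and let $|\lambda|\to\infty$ (the paper merely multiplies by $\mathcal{G}^{-1}$ on the right before taking the limit, whereas you invert $\mathcal{G}_{\infty}$ at the end). Your explicit justification that $\lambda^{-1}\partial_x\mathcal{G}\to 0$, via the rational structure of the dressing factor, is a point the paper leaves implicit.
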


\begin{proof}
After substituting the first line in (\ref{linbunpol}) into (\ref{facpde1}), we obtain the equation
\begin{equation}
\rmi\partial_x\mathcal{G} - \lambda S^{(1)}\mathcal{G} + \lambda\mathcal{G}S^{(0)} = 0,
\label{facpde1a}
\end{equation}
that can be rewritten as
\begin{equation}
\lambda S^{(1)} = \rmi\partial_x\mathcal{G}\mathcal{G}^{-1} + \lambda\mathcal{G}S^{(0)}\mathcal{G}^{-1}.
\label{facpde1b}
\end{equation}
Now we divide (\ref{facpde1b}) by $\lambda$ and set $|\lambda|\to\infty$ in it to immediately get (\ref{s1s0rel}). 
\end{proof}

The relation (\ref{s1s0rel}) shows that if we know some solution to the generalized Heisenberg ferromagnet equation (\ref{ghf})
and the asymptotic behavior of dressing factor one can find another solution. Further we shall use a dressing factor chosen in the form\footnote{We refer the reader to
\cite{book} where they can find an explanation on why we pick up dressing factor in some form or another.}
\begin{equation}
\mathcal{G}(x,t,\lambda) = I + \sum_{\gamma = 1}^{p}\frac{\lambda B_{\gamma}(x,t)}{\mu_{\gamma}(\lambda - \mu_{\gamma})},
\qquad \mu_{\gamma}\in\bbc.
\label{gansatz}
\end{equation}
We also need to know the inverse of $\mathcal{G}(x,t,\lambda)$. In order to find the latter, we make the following simple but
important observation.
\begin{prop}
If $\Psi^{(0)}(x,t,\lambda)$ is a matrix solution of bare problems, then
\begin{equation}
\tilde{\Psi}^{(0)}(x,t,\lambda) = \mathcal{E}\left\{\left[\Psi^{(0)}(x,t,\lambda^*)\right]^{\dag}\right\}^{-1}\mathcal{E}, 
\label{fs0_sym}
\end{equation}
where $*$ denotes complex conjugation, is another matrix solution to (\ref{bareaux1}) and (\ref{bareaux2}). Similarly, for
an arbitrary  matrix solution $\Psi^{(1)}(x,t,\lambda)$ of dressed problems 
\begin{equation}
\tilde{\Psi}^{(1)}(x,t,\lambda) = \mathcal{E}\left\{\left[\Psi^{(1)}(x,t,\lambda^*)\right]^{\dag}\right\}^{-1}\mathcal{E} 
\label{fs1_sym}
\end{equation} 
is again a matrix solution to (\ref{dresaux1}) and (\ref{dresaux2}).
\end{prop}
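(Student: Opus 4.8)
The plan is to verify (\ref{fs0_sym}) by conjugating the two bare linear problems satisfied by $\Psi^{(0)}$, and then to observe that exactly the same manipulation yields (\ref{fs1_sym}) word for word. First I would write (\ref{bareaux1}) in pole gauge as the identity $\rmi\partial_x\Psi^{(0)}(x,t,\lambda)=\lambda S^{(0)}(x,t)\Psi^{(0)}(x,t,\lambda)$, valid for every $\lambda$, evaluate it at $\lambda^*$, and take the Hermitian conjugate; since $(\rmi)^*=-\rmi$ and $(\lambda^*)^*=\lambda$ this gives $-\rmi\partial_x\Phi=\lambda\,\Phi\,[S^{(0)}]^{\dag}$, where $\Phi:=[\Psi^{(0)}(x,t,\lambda^*)]^{\dag}$. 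Differentiating $\Phi\Phi^{-1}=I$ yields $\partial_x\Phi^{-1}=-\Phi^{-1}(\partial_x\Phi)\Phi^{-1}$, hence $\rmi\partial_x\Phi^{-1}=\lambda\,[S^{(0)}]^{\dag}\,\Phi^{-1}$. Using $\mathcal{E}^2=I$ to write $\Phi^{-1}=\mathcal{E}\tilde{\Psi}^{(0)}\mathcal{E}$, substituting, and conjugating once more by $\mathcal{E}$ turns this into $\rmi\partial_x\tilde{\Psi}^{(0)}=\lambda\,\mathcal{E}[S^{(0)}]^{\dag}\mathcal{E}\,\tilde{\Psi}^{(0)}$, which is precisely $L^{(0)}(\lambda)\tilde{\Psi}^{(0)}=0$ once the pseudo-Hermiticity condition (\ref{pseudoherm}) is invoked.

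For the $t$-equation I would run the same three operations --- evaluate at $\lambda^*$, take $\dag$, pass to the inverse, conjugate by $\mathcal{E}$ --- on $\rmi\partial_t\Psi^{(0)}+V^{(0)}\Psi^{(0)}=\Psi^{(0)}C(\lambda)$, using $((\lambda^*)^k)^*=\lambda^k$. The outcome is $\rmi\partial_t\tilde{\Psi}^{(0)}+\big(\sum_k\mathcal{E}[V^{(0)}_k]^{\dag}\mathcal{E}\,\lambda^k\big)\tilde{\Psi}^{(0)}=\tilde{\Psi}^{(0)}\,\mathcal{E}[C(\lambda^*)]^{\dag}\mathcal{E}$, so $\tilde{\Psi}^{(0)}$ satisfies $A^{(0)}(\lambda)\tilde{\Psi}^{(0)}=\tilde{\Psi}^{(0)}\tilde{C}(\lambda)$ with the non-singular matrix $\tilde{C}(\lambda):=\mathcal{E}[C(\lambda^*)]^{\dag}\mathcal{E}$, i.e.\ it is again a solution of (\ref{bareaux2}), \emph{provided} the coefficients themselves are pseudo-Hermitian, $\mathcal{E}[V^{(0)}_k]^{\dag}\mathcal{E}=V^{(0)}_k$. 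This last fact is the consistency requirement of the pseudo-Hermitian reduction: the map $X\mapsto\mathcal{E}X^{\dag}\mathcal{E}$ reverses products and conjugates scalars while fixing $S^{(0)}$ and its $x$-derivatives, and each $V^{(0)}_k$ is built polynomially from these data with coefficients in $\{a,b,\rmi a,\rmi b,\dots\}$, so the identity follows from (\ref{pseudoherm}) for $S^{(0)}$ together with the reality of $a$ and $b$ (and one may likewise take $C$ pseudo-Hermitian, in which case $\tilde{C}=C$).

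Finally, since $S^{(1)}$ also satisfies (\ref{pseudoherm}) and the $V^{(1)}_k$ are pseudo-Hermitian for the same reason, repeating the computation verbatim with the superscript $0$ replaced by $1$ establishes (\ref{fs1_sym}). I do not anticipate a genuine obstacle here: the argument is purely linear-algebraic, and the only delicate point is to carry out the three operations ``conjugate $\lambda$ / take $\dag$ / pass to the inverse'' in the correct order and to track the scalar $\lambda$-powers; the one substantive input beyond that bookkeeping is the pseudo-Hermiticity of the $V$-coefficients, which is exactly where the hypothesis $a,b\in\bbr$ enters.
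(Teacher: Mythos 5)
Your proof is correct and follows exactly the route the paper intends: the paper's own proof is the single sentence that the claim ``directly follows from the pseudo-Hermiticity condition,'' and your computation (evaluate at $\lambda^*$, take $\dag$, pass to the inverse, conjugate by $\mathcal{E}$, invoke (\ref{pseudoherm})) is the standard expansion of that remark. Your extra care about the pseudo-Hermiticity of the $V^{(0)}_k$ and the transformed matrix $\tilde{C}(\lambda)=\mathcal{E}[C(\lambda^*)]^{\dag}\mathcal{E}$ is more than the paper supplies and is accurate.
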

\begin{proof}
The validity of this statement directly follows from the pseudo-Hermiticity condition (\ref{pseudoherm}).
\end{proof}

The transformation introduced in (\ref{fs0_sym}) (or (\ref{fs1_sym})) can be viewed as an action of the group $\bbz_2$
onto the set of bare (or dressed) fundamental solutions $\left\{\Psi^{(0)}\right\}$ (or $\left\{\Psi^{(1)}\right\}$).
Let us denote this action by $\mathcal{K}$ and impose the natural requirement that the diagram 
\begin{equation}
\begin{CD}
\left\{\Psi^{(0)}\right\} @>\mathcal{G}>> 	\left\{\Psi^{(1)}\right\} \\
@V\mathcal{K}VV @VV\mathcal{K}V \\
\left\{\Psi^{(0)}\right\} @>>\mathcal{G}> 	\left\{\Psi^{(1)}\right\} 
\end{CD}
\label{cd}
\end{equation}
is commutative. The commutativity of (\ref{cd}) is equivalent to the symmetry condition
\begin{equation}
\mathcal{E}\left[\mathcal{G}^{\dag}(x,t,\lambda^*)\right]^{-1}\mathcal{E} = \mathcal{G}(x,t,\lambda),
\label{facsym}
\end{equation}
satisfied by dressing factor. In view of (\ref{gansatz}) and (\ref{facsym}) the inverse of dressing factor reads:
\begin{equation}
\left[\mathcal{G}(x,t,\lambda)\right]^{-1} = I + \sum_{\gamma = 1}^{p}\frac{\lambda\mathcal{E}B^{\dag}_{\gamma}(x,t)\mathcal{E}}{\mu^*_{\gamma}(\lambda - \mu^*_{\gamma})}\cdot
\label{g_inv}\end{equation}

Further considerations depend on whether or not the poles of dressing factor are real. First, we shall assume
that all of the poles are in generic position, i.e., we have that $\mu_{\gamma}\notin\bbr$, $\gamma =1,2,\ldots, p$.
For simplicity we shall impose even the stronger requirement: $\mu^*_{\gamma}\neq \mu_{\tau}$, $\gamma, \tau = 1,2,\ldots, p$.

Let us take a look at the identity $\mathcal{G}\mathcal{G}^{-1} =I$. After evaluating the residue at $\mu^*_{\tau}$, we
derive 
\begin{equation}
\left[I + \sum^{p}_{\gamma = 1}\frac{\mu^*_{\tau}B_{\gamma}(x,t)}{\mu_{\gamma}(\mu^*_{\tau} - \mu_{\gamma})}\right]\mathcal{E}B^{\dag}_{\tau}(x,t)\mathcal{E} = 0.
\label{algrel1}\end{equation}
This algebraic relation implies that $B_{\gamma}(x,t)$ is a singular matrix, therefore it admits the representation
\begin{equation}
B_{\gamma}(x,t) = X_{\gamma}(x,t)F^T_{\gamma}(x,t),
\label{BXF}\end{equation}
where $X_{\gamma}(x,t)$ and $F_{\gamma}(x,t)$ are some $n\times s$-matrices of rank $s$ ($s < n$) and superscript $T$
stands for matrix transposition. Substituting (\ref{BXF})
into (\ref{algrel1}), we get
\begin{equation}
\left[I + \sum^{p}_{\gamma = 1}\frac{\mu^*_{\tau}X_{\gamma}(x,t)F^T_{\gamma}(x,t)}{\mu_{\gamma}(\mu^*_{\tau} - \mu_{\gamma})}\right]\mathcal{E}F^{*}_{\tau}(x,t) = 0.
\label{algrel2}
\end{equation}
The following statement is straightforward from (\ref{algrel2}).
\begin{prop}
The rectangular matrices $X_{\tau}$, $\tau = 1,2,\ldots,p$ are solutions to the linear system of matrix equations
\begin{equation}
\mathcal{E}F^*_{\tau}(x,t) = \sum^{p}_{\gamma = 1}X_{\gamma}(x,t)D_{\gamma\tau}(x,t) \ ,
\label{algrel3}
\end{equation}
where 
\[D_{\gamma\tau}(x,t) = \frac{\mu^*_{\tau}F^T_{\gamma}(x,t)\mathcal{E}F^*_{\tau}(x,t)}{\mu_{\gamma}(\mu_{\gamma} - \mu^*_{\tau})}
\ \cdot\]
\label{algsys_FX}\end{prop}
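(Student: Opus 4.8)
The plan is to derive the system (\ref{algrel3}) directly from the algebraic relation (\ref{algrel2}) by isolating the factor $\mathcal{E}F^*_\tau(x,t)$ and rearranging. First I would rewrite (\ref{algrel2}) by expanding the bracket and moving the non-summed term to the left-hand side, obtaining
\begin{equation*}
-\mathcal{E}F^*_{\tau}(x,t) = \sum^{p}_{\gamma=1}\frac{\mu^*_{\tau}}{\mu_{\gamma}(\mu^*_{\tau}-\mu_{\gamma})}\,X_{\gamma}(x,t)\,F^T_{\gamma}(x,t)\mathcal{E}F^{*}_{\tau}(x,t).
\end{equation*}
The crucial observation is that the quantity $F^T_{\gamma}(x,t)\mathcal{E}F^*_{\tau}(x,t)$ is a small $s\times s$ matrix, so it can be pulled out to the right of $X_{\gamma}(x,t)$ without disturbing the structure of the sum. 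Then I would absorb the scalar prefactor together with this $s\times s$ block into a single matrix and flip the sign of the denominator, $\mu^*_{\tau}-\mu_{\gamma}=-(\mu_{\gamma}-\mu^*_{\tau})$, which produces exactly the coefficient $D_{\gamma\tau}(x,t)$ as defined in the statement. This yields (\ref{algrel3}).

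Two points require a little care. The first is that (\ref{algrel2}) itself must be justified as holding for \emph{every} index $\tau$, not just a fixed one; this is immediate because the residue computation leading to (\ref{algrel1}) can be carried out at each pole $\mu^*_{\tau}$ of $[\mathcal{G}]^{-1}$ separately, using the genericity assumption $\mu^*_{\gamma}\neq\mu_{\tau}$ that guarantees the poles of $\mathcal{G}$ and of $\mathcal{G}^{-1}$ are distinct and simple, so each residue is clean. The second point is the passage from (\ref{algrel1}) to (\ref{BXF}): the bracket in (\ref{algrel1}) is a genuine matrix relation asserting that a fixed nonsingular-looking operator annihilates the columns of $\mathcal{E}B^{\dag}_{\tau}\mathcal{E}$, forcing $B_\tau$ to have rank strictly less than $n$; writing its reduced rank factorization as $X_\tau F_\tau^T$ with both factors of full column rank $s$ is then standard linear algebra, and I would simply invoke it. Substituting this factorization and cancelling the common right factor $\mathcal{E}$ (or rather noting $\mathcal{E}B_\tau^\dagger\mathcal{E}=\mathcal{E}F_\tau^* X_\tau^\dagger\mathcal{E}$, whose column space is that of $\mathcal{E}F_\tau^*$) gives (\ref{algrel2}).

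The main obstacle, such as it is, is purely bookkeeping: one must be scrupulous about the order of matrix factors, since $X_\gamma$ is $n\times s$ and $F_\gamma^T$ is $s\times n$, so the product $F^T_\gamma\mathcal{E}F^*_\tau$ genuinely is the only rearrangement that makes the right-hand side a sum of $X_\gamma$ times $s\times s$ blocks. There is no analytic difficulty and no deep idea; the content of the proposition is simply that the residue identity, once the rank-$s$ structure of the $B_\gamma$ is exploited, becomes a closed linear system for the $X_\tau$ with explicitly computable coefficients $D_{\gamma\tau}$. I would therefore present the proof as: (i) evaluate residues to get (\ref{algrel1}) for all $\tau$; (ii) invoke the rank factorization (\ref{BXF}) to reach (\ref{algrel2}); (iii) rearrange and read off $D_{\gamma\tau}$.
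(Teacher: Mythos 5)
Your proposal is correct and follows exactly the route the paper intends: the paper states that the proposition is ``straightforward from (\ref{algrel2})'', and your argument is precisely that rearrangement --- move the identity term across, use associativity to group the $s\times s$ block $F^T_{\gamma}\mathcal{E}F^*_{\tau}$ with the scalar prefactor, and flip the sign of the denominator to read off $D_{\gamma\tau}$. The supporting steps you flag (evaluating the residue of $\mathcal{G}\mathcal{G}^{-1}=I$ at each $\mu^*_{\tau}$ and invoking the rank factorization (\ref{BXF})) are likewise how the paper arrives at (\ref{algrel2}).
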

The linear system (\ref{algrel3}) allows one to find $X_{\tau}$ if $F_{\tau}$ are known.
In the simplest case when $\mathcal{G}$ has a single complex pole $\mu$ (i.e., we have that $p=1$) and $X(x,t)$ and $F(x,t)$
are column vectors ($s = 1$) the matrix equation (\ref{algrel3}) has the solution
\begin{equation}
X(x,t) = \frac{\mu(\mu - \mu^*)\mathcal{E}F^*(x,t)}{\mu^* F^T(x,t)\mathcal{E}F^*(x,t)}\ \cdot
\label{XF_vec}\end{equation}
Next proposition answers the question how one can find $F_{\tau}$.

\begin{prop}
The rectangular matrices $F_{\tau}(x,t)$, $\tau = 1,2,\ldots, p$ can be expressed through any fundamental
solution $\Psi^{(0)}$ to (\ref{bareaux1}) and (\ref{bareaux2}) defined in the vicinity of the pole $\mu_{\tau}$ through the equality:	
\begin{equation}
F^T_{\tau}(x,t) = F^T_{\tau,0}(t)\left[\Psi^{(0)}(x,t,\mu_{\tau})\right]^{-1}.
\label{F_psi0}\end{equation}
Above, $F_{\tau,0}(t)$ are some yet undetermined $n\times s$-matrices. 
\label{prop_F}\end{prop}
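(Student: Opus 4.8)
The plan is to read the statement off the dressing-factor equation~(\ref{facpde1a}) by taking its residue at the pole $\lambda=\mu_\tau$ and then peeling off the rectangular factor $F^T_\tau$ with the help of the decomposition~(\ref{BXF}). In the ansatz~(\ref{gansatz}) only the $\gamma=\tau$ term is singular at $\lambda=\mu_\tau$, so the residue of $\mathcal{G}$ there equals $B_\tau$, while $S^{(0)}$ and $S^{(1)}$ are $\lambda$-independent and hence regular; thus the residue of~(\ref{facpde1a}) at $\lambda=\mu_\tau$ reads
\[ \rmi\partial_x B_\tau-\mu_\tau S^{(1)}B_\tau+\mu_\tau B_\tau S^{(0)}=0 . \]
Substituting $B_\tau=X_\tau F^T_\tau$, expanding $\partial_x(X_\tau F^T_\tau)$ and multiplying on the left by $(X^\dag_\tau X_\tau)^{-1}X^\dag_\tau$ --- admissible since $X_\tau$ has full column rank $s$, so $X^\dag_\tau X_\tau$ is invertible --- the term $\mu_\tau X_\tau F^T_\tau S^{(0)}$ passes through this projector cleanly and one is left with a first-order linear system for $F^T_\tau$,
\[ \rmi\partial_x F^T_\tau=N_\tau F^T_\tau-\mu_\tau F^T_\tau S^{(0)},\qquad N_\tau(x,t):=(X^\dag_\tau X_\tau)^{-1}X^\dag_\tau\bigl(\mu_\tau S^{(1)}X_\tau-\rmi\partial_x X_\tau\bigr), \]
with $N_\tau$ an $s\times s$ matrix.

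Next I would exploit the freedom still present in the decomposition~(\ref{BXF}): replacing $(X_\tau,F^T_\tau)$ by $(X_\tau P_\tau,P^{-1}_\tau F^T_\tau)$ with $P_\tau(x,t)$ an invertible $s\times s$ matrix leaves $B_\tau$, and hence $\mathcal{G}$, untouched (and the system~(\ref{algrel3}) is readily seen to transform covariantly under this rescaling), while choosing $P^{-1}_\tau$ to solve the auxiliary linear ODE $\rmi\partial_x P^{-1}_\tau+P^{-1}_\tau N_\tau=0$ --- which is always solvable with $P_\tau$ non-singular --- removes the term $N_\tau F^T_\tau$. In such a representative, $F^T_\tau$ obeys $\rmi\partial_x F^T_\tau+\mu_\tau F^T_\tau S^{(0)}=0$. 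On the other hand, the bare problem~(\ref{bareaux1}) at $\lambda=\mu_\tau$, that is, $\rmi\partial_x\Psi^{(0)}(x,t,\mu_\tau)=\mu_\tau S^{(0)}\Psi^{(0)}(x,t,\mu_\tau)$, together with $\partial_x\bigl(\Psi^{(0)}[\Psi^{(0)}]^{-1}\bigr)=0$ gives $\rmi\partial_x[\Psi^{(0)}(x,t,\mu_\tau)]^{-1}=-\mu_\tau[\Psi^{(0)}(x,t,\mu_\tau)]^{-1}S^{(0)}$, so the rows of $[\Psi^{(0)}(x,t,\mu_\tau)]^{-1}$ satisfy precisely the same equation as $F^T_\tau$. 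Hence $\partial_x\bigl(F^T_\tau(x,t)\Psi^{(0)}(x,t,\mu_\tau)\bigr)=0$, which is~(\ref{F_psi0}) with $F^T_{\tau,0}(t):=F^T_\tau(x,t)\Psi^{(0)}(x,t,\mu_\tau)$; an entirely parallel residue computation for the $t$-equation~(\ref{facpde2}) shows that $F_{\tau,0}$ remains constrained in $t$ --- the matrix $C(\lambda)$ of~(\ref{bareaux2}) now entering --- but is not fixed, which is why the statement leaves $F_{\tau,0}(t)$ undetermined. (Here $\Psi^{(0)}$ is any fundamental solution analytic and invertible near $\mu_\tau$, which exists because $\mu_\tau$ is in generic position, off the bare spectrum.)

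The two residue evaluations and the inversion identity for $\Psi^{(0)}$ are routine. The genuinely delicate step --- and, I expect, the main obstacle --- is the normalization in the second paragraph: one must verify that the residual freedom $(X_\tau,F^T_\tau)\mapsto(X_\tau P_\tau,P^{-1}_\tau F^T_\tau)$ is exactly rich enough to absorb the connection term $N_\tau$, equivalently that the integrating factor produced by $\rmi\partial_x P^{-1}_\tau=-P^{-1}_\tau N_\tau$ can be shifted onto $X_\tau$ without spoiling any of the relations already established for $\mathcal{G}$.
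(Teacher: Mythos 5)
Your argument is correct and follows essentially the same route as the paper: evaluate the residue of the dressing equation at $\lambda=\mu_\tau$, reduce it to $\rmi\partial_x F^T_\tau + \mu_\tau F^T_\tau S^{(0)} = N_\tau F^T_\tau$ for some $s\times s$ connection matrix, absorb $N_\tau$ into the $\mathrm{GL}(s)$ ambiguity of the factorization (\ref{BXF}) by solving a linear ODE for the gauge matrix $P_\tau$, and integrate against $\left[\Psi^{(0)}(x,t,\mu_\tau)\right]^{-1}$. The only technical difference is how the connection term is isolated: the paper multiplies the residue identity on the right by $\mathcal{G}^{-1}(\mu_\tau)$ (so that the regular term $\lambda S^{(1)}$ drops out and the relation (\ref{algrel2}) eliminates the $\partial_x X_\tau$ contribution, leaving $\Gamma F^T_\tau$), whereas you project on the left with the pseudo-inverse $(X^\dag_\tau X_\tau)^{-1}X^\dag_\tau$, which yields the same conclusion together with an explicit formula for $N_\tau$ --- both steps are valid, and your treatment of the gauge-fixing is in fact more explicit than the paper's.
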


\begin{proof}
Let us evaluate the residue in equation (\ref{facpde1b}) at an arbitrary pole $\mu_{\tau}$. The result reads:
\begin{eqnarray*}
0 & = &\lim_{\lambda\to\mu_{\tau}}(\lambda - \mu_{\tau})\left[\rmi\partial_x \mathcal{G} + \lambda\mathcal{G}S^{(0)}\right]\mathcal{G}^{-1}\\
& = &\left[\rmi\partial_xB_{\tau} + \mu_{\tau} B_{\tau} S^{(0)}\right]\left[I + \sum_{\gamma}\frac{\mu_{\tau} \mathcal{E} B^{\dag}_{\gamma}\mathcal{E}}{\mu^*_{\gamma}(\mu_{\tau} - \mu^*_{\gamma})}\right].
\end{eqnarray*}
In view of the decomposition (\ref{BXF}) and the algebraic relation (\ref{algrel2}), the equation above yields to
\begin{eqnarray*}
&& X_{\tau}\left[\rmi\partial_xF^T_{\tau} + \mu_{\tau} F^T_{\tau}S^{(0)}\right]\left[I + \sum_{\gamma}\frac{\mu_{\tau} \mathcal{E}F^*_{\gamma} X^{\dag}_{\gamma}\mathcal{E}}{\mu^*_{\gamma}(\mu_{\tau} - \mu^*_{\gamma})}\right]
=0\\
&& \Rightarrow\qquad \rmi\partial_xF^T_{\tau} + \mu_{\tau} F^T_{\tau}S^{(0)} = \Gamma F^T_{\tau},
\end{eqnarray*}
where $\Gamma(x,t)$ is a $s\times s$-matrix. Taking into account the existing ambiguity in the decomposition (\ref{BXF}),
we can set $\Gamma = 0$ without any loss of generality. Thus, we have that
\[\rmi\partial_xF^T_{\tau} + \mu_{\tau} F^T_{\tau}S^{(0)} = 0, \]
which is integrated to give (\ref{F_psi0}).
\end{proof}

The time evolution of dressed solution is driven by the dispersion law of the NEE. We recall that the dispersion law of
a completely integrable NEE whose Lax pair is in the form (\ref{lax01}) and (\ref{lax02}) (or in the form of (\ref{lax11}) and (\ref{lax12})) can be defined through the equality
\[f(\lambda) = \lim_{x\to \pm\infty}g^{(\sigma)}(x,t)\sum^{M}_{k=1}V^{(\sigma)}_{k}(x,t)\lambda^{k}\left[g^{(\sigma)}(x,t)\right]^{-1},
\qquad \sigma = 0,1,\]
where $g^{(\sigma)}(x,t)$ is a gauge transformation that casts $S^{(\sigma)}(x,t)$ into a diagonal form. The existence of $g^{(\sigma)}(x,t)$ follows from the algebraic constraint in the first line of (\ref{linbunpol}), see \cite{bgsiam} for more details. For instance, the dispersion law of the equation (\ref{ghf}) is given by
\begin{equation}
f(\lambda) = \lambda^2 \left[a \Sigma + b\left(\Sigma^2 - \frac{2r}{n}I\right)\right],
\label{ghf_disp}\end{equation}
where the matrix $\Sigma$ is introduced in (\ref{boundcon}) and $2r = \tr \left[S^{(\sigma)}\right]^2$.

\begin{prop}
The time dependence of dressed solution can fully be recovered by using the correspondence
\begin{equation}
F^T_{\tau,0} \to F^T_{\tau, 0}\rme^{-\rmi f(\mu_{\tau})t},\qquad \tau = 1,2,\ldots, p ,
\label{F0_cor}\end{equation}
where $f(\lambda)$ is the dispersion law of the NEE.
\label{trecov1}\end{prop}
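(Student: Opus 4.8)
\medskip
\noindent\textbf{Proof proposal.} The plan is to run the residue computation from the proof of Proposition~\ref{prop_F} a second time, now starting from the other dressing-factor equation \eqref{facpde2} in place of \eqref{facpde1b}. This produces a linear first-order equation in $t$ for each rectangular block $F_\tau$, which, combined with the representation \eqref{F_psi0} already extracted from the $x$-equation, integrates at once to the advertised replacement.

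First I would rewrite \eqref{facpde2} as $\bigl[\rmi\partial_t\mathcal{G}-\mathcal{G}V^{(0)}\bigr]\mathcal{G}^{-1}=-V^{(1)}$; the right-hand side is a polynomial in $\lambda$, hence regular at each pole $\mu_\tau$. Equating the residue of the left-hand side at $\mu_\tau$ to zero, and using that $\mathcal{G}$ has residue $B_\tau$ there while $\mathcal{G}^{-1}$ (with poles only at the $\mu^*_\gamma\neq\mu_\tau$) and $V^{(0)}$ are regular, one obtains
\[
\bigl[\rmi\partial_t B_\tau - B_\tau V^{(0)}(x,t,\mu_\tau)\bigr]\mathcal{G}^{-1}(x,t,\mu_\tau)=0,\qquad V^{(0)}(x,t,\lambda)=\sum_{k=1}^{M}V^{(0)}_k(x,t)\lambda^k .
\]
Inserting $B_\tau=X_\tau F^T_\tau$, using the identity $F^T_\tau\mathcal{G}^{-1}(x,t,\mu_\tau)=0$ (a consequence of \eqref{algrel2}) to cancel the term carrying $\partial_t X_\tau$, and dropping the injective left factor $X_\tau$, this reduces to $\bigl[\rmi\partial_t F^T_\tau - F^T_\tau V^{(0)}(x,t,\mu_\tau)\bigr]\mathcal{G}^{-1}(x,t,\mu_\tau)=0$. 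Since the rows of $F^T_\tau$ span the left kernel of $\mathcal{G}^{-1}(x,t,\mu_\tau)$, the bracket must equal $\Gamma F^T_\tau$ for some $s\times s$ matrix $\Gamma(x,t)$; arguing exactly as with the analogous term in the proof of Proposition~\ref{prop_F}, the residual $GL(s)$ ambiguity in the factorisation $B_\tau=X_\tau F^T_\tau$ allows one to take $\Gamma=0$, so that
\[
\rmi\partial_t F^T_\tau - F^T_\tau V^{(0)}(x,t,\mu_\tau)=0 .
\]

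It then remains to integrate this equation together with $F^T_\tau(x,t)=F^T_{\tau,0}(t)\bigl[\Psi^{(0)}(x,t,\mu_\tau)\bigr]^{-1}$. For a trivial background $S^{(0)}\equiv\Sigma$ one has $S^{(0)}_x=0$, so the coefficients $V^{(0)}_k$ are constant and $V^{(0)}(x,t,\lambda)$ coincides with the dispersion law $f(\lambda)$ of \eqref{ghf_disp} (here $\Sigma$ is already diagonal, so the gauge $g^{(0)}$ entering the definition of $f$ is the identity); moreover one may take the bare solution of \eqref{bareaux1} in the form $\Psi^{(0)}(x,\lambda)=\rme^{-\rmi\lambda\Sigma x}$, which is independent of $t$, so $F^T_\tau(x,t)=F^T_{\tau,0}(t)\,\rme^{\rmi\mu_\tau\Sigma x}$. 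Substituting into the ODE above and using that $f(\mu_\tau)$, being a polynomial in $\Sigma$, commutes with $\rme^{\rmi\mu_\tau\Sigma x}$, the $x$-exponential factors out and one is left with $\rmi\partial_t F^T_{\tau,0}=F^T_{\tau,0}f(\mu_\tau)$, i.e. $F^T_{\tau,0}(t)=F^T_{\tau,0}(0)\,\rme^{-\rmi f(\mu_\tau)t}$, which is precisely \eqref{F0_cor}. Since $X_\tau$ (through \eqref{algrel3}), hence $B_\gamma=X_\gamma F^T_\gamma$, hence $\mathcal{G}_\infty=I+\sum_\gamma\mu_\gamma^{-1}B_\gamma$, and hence $S^{(1)}$ through \eqref{s1s0rel}, are all built algebraically from the $F_\tau$'s, this single replacement transports the complete time dependence into the dressed solution.

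The step I expect to be the main obstacle is the passage from the residue identity to the clean ODE, that is, the elimination of $\Gamma$: one must verify that the $GL(s)$ gauge already spent in Proposition~\ref{prop_F} to normalise the $x$-residue equation is compatible with the one required here for the $t$-residue equation. This compatibility is guaranteed by $[L^{(1)},A^{(1)}]=0$ --- equivalently, by the fact that one and the same fundamental solution $\Psi^{(0)}$ satisfies both bare problems \eqref{bareaux1} and \eqref{bareaux2}, so that the $x$- and $t$-residue equations for $F_\tau$ obey a zero-curvature condition and can be reduced simultaneously --- but it deserves to be stated explicitly. Everything else is routine: the residue evaluation copies that of Proposition~\ref{prop_F}, and the final equation is a constant-coefficient linear ODE solved by inspection.
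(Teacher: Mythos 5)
Your proposal is correct and follows essentially the same route as the paper: evaluate the residue of the $t$-part dressing-factor equation (\ref{facpde2a}) at $\mu_\tau$, use $B_\tau = X_\tau F^T_\tau$ together with the algebraic relation (\ref{algrel2}) to reduce it to $\rmi\partial_t F^T_\tau - F^T_\tau \sum_k V^{(0)}_k\mu_\tau^k = 0$, and then integrate via (\ref{F_psi0}). The only differences are presentational --- you spell out the constant-background identification of $V^{(0)}(\mu_\tau)$ with $f(\mu_\tau)$ and the compatibility of the two $\Gamma=0$ normalisations, points the paper disposes of by appeal to the argument of Proposition \ref{prop_F}.
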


\begin{proof}
The scheme of proof is similar to that of Proposition \ref{prop_F}. After evaluating the residue
in the equation 
\begin{equation}
\begin{split}
&\rmi\partial_t\mathcal{G} + \sum^{M}_{k = 1}V^{(1)}_k\lambda^k\mathcal{G} 
- \mathcal{G}\sum^{M}_{k = 1}V^{(0)}_k\lambda^k = 0 \qquad\Leftrightarrow \\
& \sum^{M}_{k = 1}V^{(1)}_k\lambda^k = - \rmi\partial_t\mathcal{G}\mathcal{G}^{-1}
 + \mathcal{G}\sum^{M}_{k = 1}V^{(0)}_k\lambda^k \mathcal{G}^{-1}
\end{split}
\label{facpde2a}
\end{equation}
at an arbitrary pole $\mu_{\tau}$, we derive
\begin{eqnarray*}
0 & = & \lim_{\lambda\to\mu_{\tau}}(\lambda - \mu_{\tau})\left[\rmi\partial_t \mathcal{G} - \mathcal{G}\sum^{M}_{k = 1}V^{(0)}_k\lambda^k \right]\mathcal{G}^{-1}\\
& = &\left[\rmi\partial_tB_{\tau} - B_{\tau} \sum^{M}_{k = 1}V^{(0)}_k\mu^k_{\tau}\right]\left[I + \sum_{\gamma}\frac{\mu_{\tau} \mathcal{E} B^{\dag}_{\gamma}\mathcal{E}}{\mu^*_{\gamma}(\mu_{\tau} - \mu^*_{\gamma})}\right]\\
& \stackrel{(\ref{BXF}), (\ref{algrel2})}{=} & X_{\tau}\left[\rmi\partial_tF^T_{\tau} - F^T_{\tau} \sum^{M}_{k = 1}V^{(0)}_k\mu^k_{\tau}\right]\left[I + \sum_{\gamma}\frac{\mu_{\tau} \mathcal{E} B^{\dag}_{\gamma}\mathcal{E}}{\mu^*_{\gamma}(\mu_{\tau} - \mu^*_{\gamma})}\right].
\end{eqnarray*}
Making use of the same argument as in the proof of Proposition \ref{prop_F}, we get 
\begin{equation}
\rmi\partial_tF^T_{\tau} - F^T_{\tau} \sum_{k}V^{(0)}_k\mu^k_{\tau} = 0.
\label{FV0}\end{equation} 
Now, we substitute (\ref{F_psi0}) into (\ref{FV0}) to obtain the linear equation
\[\rmi\partial_tF^T_{\tau, 0} - F^T_{\tau, 0} f(\mu_{\tau}) = 0,\]
which immediately leads to the correspondence (\ref{F0_cor}).
\end{proof}

Let us now consider the case when dressing factor has real poles, i.e., we have that 
\begin{equation}
\mathcal{G}(x,t,\lambda) = I + \lambda\sum^{p}_{\gamma = 1}\frac{B_{\gamma}(x,t)}{\mu_{\gamma}(\lambda - \mu_{\gamma})}\, ,\qquad \mu_{\gamma}\in\bbr\backslash\{0\}
\label{gansatz_r}\end{equation}
and its inverse
\begin{equation}
\left[\mathcal{G}(x,t,\lambda)\right]^{-1} = I + \lambda\sum^{p}_{\gamma = 1}\frac{\mathcal{E}B^{\dag}_{\gamma}(x,t)\mathcal{E}}
{\mu_{\gamma}(\lambda - \mu_{\gamma})}
\label{g_inv_r}
\end{equation}
share the same poles. In this case the identity $\mathcal{G}\mathcal{G}^{-1} = I$ produces more complicated algebraic
relations.

\begin{prop}
The residues of the dressing factor (\ref{gansatz_r}) obey the algebraic equations
\begin{eqnarray}
&& B_{\tau}\mathcal{E}B^{\dag}_{\tau} = 0, \qquad \tau = 1,2,\ldots, p ,\label{algrel1_r}\\ 
&& \Omega_{\tau} \mathcal{E}B^{\dag}_{\tau} + B_{\tau}\mathcal{E}\Omega^{\dag}_{\tau} = 0\, ,
\label{algrel2_r}
\end{eqnarray}
where 
\[\Omega_{\tau} = I + \frac{B_{\tau}}{\mu_{\tau}} + \sum_{\gamma\neq \tau}\frac{\mu_{\tau}B_{\gamma}}{\mu_{\gamma}(\mu_{\tau} - \mu_{\gamma})}\, \cdot \]
\end{prop}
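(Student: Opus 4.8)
The plan is to read the identity $\mathcal{G}\mathcal{G}^{-1} = I$ as an identity of rational functions of $\lambda$ and to compare the principal parts of the Laurent expansions of its two sides at each pole $\mu_\tau$. The new feature compared with the generic-pole case is that $\mathcal{G}$ from (\ref{gansatz_r}) and $[\mathcal{G}]^{-1}$ from (\ref{g_inv_r}) share a simple pole at $\lambda = \mu_\tau$, so their product may a priori have a pole of order two there; the vanishing of the second-order Laurent coefficient will yield (\ref{algrel1_r}) and the vanishing of the first-order coefficient will yield (\ref{algrel2_r}).

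First I would isolate the principal part of $\mathcal{G}$ at $\lambda = \mu_\tau$. A short rearrangement of (\ref{gansatz_r}), in which the factor $\lambda - \mu_\tau$ coming from the $\gamma = \tau$ summand cancels against the denominator, gives
\[
\mathcal{G}(x,t,\lambda) = \frac{B_\tau(x,t)}{\lambda - \mu_\tau} + \Omega_\tau(x,t) + \mathcal{O}(\lambda - \mu_\tau),
\]
where $\Omega_\tau$ turns out to be exactly the matrix written in the statement; the only point to verify here is that $\lim_{\lambda\to\mu_\tau}\bigl[\mathcal{G} - B_\tau/(\lambda-\mu_\tau)\bigr]$ collapses to $I + B_\tau/\mu_\tau + \sum_{\gamma\neq\tau}\mu_\tau B_\gamma/[\mu_\gamma(\mu_\tau - \mu_\gamma)]$. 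Carrying out the same computation for (\ref{g_inv_r}), and using that the $\mu_\gamma$ are real so that complex conjugation does not move the poles, yields
\[
[\mathcal{G}(x,t,\lambda)]^{-1} = \frac{\mathcal{E}B^\dag_\tau(x,t)\mathcal{E}}{\lambda - \mu_\tau} + \mathcal{E}\Omega^\dag_\tau(x,t)\mathcal{E} + \mathcal{O}(\lambda - \mu_\tau).
\]

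Then I would multiply these two expansions and impose that the result be regular at $\mu_\tau$ (in fact equal to $I$). The coefficient of $(\lambda - \mu_\tau)^{-2}$ is $B_\tau\mathcal{E}B^\dag_\tau\mathcal{E}$, hence it must vanish; since $\mathcal{E}^2 = I$, this is equivalent to (\ref{algrel1_r}). The coefficient of $(\lambda - \mu_\tau)^{-1}$ is $B_\tau\mathcal{E}\Omega^\dag_\tau\mathcal{E} + \Omega_\tau\mathcal{E}B^\dag_\tau\mathcal{E}$, which must also vanish; multiplying on the right by $\mathcal{E}$ gives precisely (\ref{algrel2_r}). Running this for every $\tau = 1,\dots,p$ completes the argument.

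The main obstacle is bookkeeping discipline rather than any deep difficulty: one must resist simply "taking the residue" of $\mathcal{G}\mathcal{G}^{-1}$ as was done in the complex-pole case, because the pole of the product is now of order two, so both Laurent coefficients must be tracked; and one has to pin down the regular part $\Omega_\tau$ correctly, which is where the cancellation of the $\gamma = \tau$ term is essential and where the reality of the $\mu_\gamma$ enters. Once the two expansions above are in place, the remainder is a single line of matrix multiplication.
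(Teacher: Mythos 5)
Your proposal is correct and follows the same route as the paper, which likewise extracts (\ref{algrel1_r}) from the coefficient of $(\lambda-\mu_\tau)^{-2}$ and (\ref{algrel2_r}) from the residue of the identity $\mathcal{G}\mathcal{G}^{-1}=I$; your version merely spells out the Laurent expansions and the identification of $\Omega_\tau$ that the paper leaves implicit.
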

\begin{proof}
The proof is analogous to the proof of Proposition \ref{algsys_FX}, e.g. to derive (\ref{algrel1_r}), one has to
evaluate the matrix coefficients before $\left(\lambda - \mu_{\tau}\right)^{-2}$ in the identity $\mathcal{G}\mathcal{G}^{-1} = I$.
Similarly, the residues of $\mathcal{G}\mathcal{G}^{-1} = I$ give rise to (\ref{algrel2_r}).
\end{proof}
From the relation (\ref{algrel1_r}) we deduce that $\det B_{\gamma}(x,t) = 0$, i.e., all of the residues of dressing factor are 
singular matrices. Therefore the decomposition (\ref{BXF}) holds true again.

\begin{corol}
In terms of the rectangular matrices $X_{\tau}(x,t)$ and $F_{\tau}(x,t)$, $\tau = 1,2,\ldots, p $ in (\ref{BXF}), the algebraic relations (\ref{algrel1_r}) and (\ref{algrel2_r}) are reduced to:
\begin{eqnarray}
&& F^T_{\tau}\mathcal{E}F^{*}_{\tau} = 0, \label{fquad}\\ 
&& \Omega_{\tau} \mathcal{E} F^*_{\tau} = X_{\tau}\alpha_{\tau},
\label{algrel2_ra}
\end{eqnarray}
where $\alpha_{\tau}(x,t)$ is a skew-Hermitian $s\times s$-matrix.
\end{corol}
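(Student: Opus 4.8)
The plan is to substitute the rank-$s$ decomposition $B_{\tau} = X_{\tau}F^T_{\tau}$ from (\ref{BXF}) into each of the two algebraic identities (\ref{algrel1_r}) and (\ref{algrel2_r}) and then strip off the full-rank rectangular factors. First I would treat (\ref{algrel1_r}): writing $B_{\tau}\mathcal{E}B^{\dag}_{\tau} = X_{\tau}F^T_{\tau}\mathcal{E}(X_{\tau}F^T_{\tau})^{\dag} = X_{\tau}\left(F^T_{\tau}\mathcal{E}F^*_{\tau}\right)X^{\dag}_{\tau} = 0$, where I have used $(F^T_{\tau})^{\dag} = F^*_{\tau}$ and that $\mathcal{E}^{\dag}=\mathcal{E}$. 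Since $X_{\tau}$ is an $n\times s$ matrix of rank $s$, the map $Y\mapsto X_{\tau}YX^{\dag}_{\tau}$ on $s\times s$ matrices is injective (left-multiplication by $X_{\tau}$ is injective because $X_{\tau}$ has a left inverse, and right-multiplication by $X^{\dag}_{\tau}$ is injective for the same reason). Hence the $s\times s$ bracket must vanish, giving (\ref{fquad}).

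Next I would handle (\ref{algrel2_r}). Substituting the decomposition into $\Omega_{\tau}\mathcal{E}B^{\dag}_{\tau} + B_{\tau}\mathcal{E}\Omega^{\dag}_{\tau} = 0$ gives
\[
\Omega_{\tau}\mathcal{E}F^*_{\tau}X^{\dag}_{\tau} + X_{\tau}F^T_{\tau}\mathcal{E}\Omega^{\dag}_{\tau} = 0 .
\]
Call $W_{\tau} = \Omega_{\tau}\mathcal{E}F^*_{\tau}$, an $n\times s$ matrix; the identity reads $W_{\tau}X^{\dag}_{\tau} = -X_{\tau}W^{\dag}_{\tau}$. Right-multiplying by the right inverse of $X^{\dag}_{\tau}$ — more precisely, pairing with a left inverse $X^{+}_{\tau}$ of $X_{\tau}$ (so $X^{+}_{\tau}X_{\tau}=I_s$) on the left and with $(X^{+}_{\tau})^{\dag}$ on the right — isolates $W_{\tau}$ modulo the column span of $X_{\tau}$. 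Concretely, the relation $W_{\tau}X^{\dag}_{\tau} = -X_{\tau}W^{\dag}_{\tau}$ forces each column of $W_{\tau}$ to lie in the range of $X_{\tau}$: indeed, projecting onto the orthogonal complement of $\mathrm{range}(X_{\tau})$ kills the right-hand side, and on the left it leaves $(P^{\perp}W_{\tau})X^{\dag}_{\tau}=0$, whence $P^{\perp}W_{\tau}=0$ since $X^{\dag}_{\tau}$ has full row rank. Therefore $W_{\tau} = X_{\tau}\alpha_{\tau}$ for some $s\times s$ matrix $\alpha_{\tau}(x,t)$, which is exactly (\ref{algrel2_ra}). Feeding this back into $W_{\tau}X^{\dag}_{\tau} = -X_{\tau}W^{\dag}_{\tau}$ yields $X_{\tau}\alpha_{\tau}X^{\dag}_{\tau} = -X_{\tau}\alpha^{\dag}_{\tau}X^{\dag}_{\tau}$, and injectivity of $Y\mapsto X_{\tau}YX^{\dag}_{\tau}$ again gives $\alpha_{\tau} = -\alpha^{\dag}_{\tau}$, i.e. $\alpha_{\tau}$ is skew-Hermitian.

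The main obstacle, such as it is, is the bookkeeping with non-square factors: one must justify carefully that $X_{\tau}$ (and likewise $F_{\tau}$) having rank $s$ really does let one cancel it from the left, and that the same works for $X^{\dag}_{\tau}$ on the right — in coordinate-free terms, that left-multiplication by a full-column-rank matrix and right-multiplication by a full-row-rank matrix are both injective linear maps. A clean way to package this is to fix, once and for all, a left inverse $X^{+}_{\tau}$ with $X^{+}_{\tau}X_{\tau} = I_s$ (which exists precisely because $\mathrm{rank}\,X_{\tau}=s$) and apply it symmetrically; the projector $P^{\perp} = I - X_{\tau}X^{+}_{\tau}$ then does the work of separating the range component. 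Everything else is the routine algebra of substituting (\ref{BXF}) and reorganising, so the statement follows.
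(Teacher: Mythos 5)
Your argument is correct and is precisely the substitution-and-cancellation that the paper dismisses as ``straightforward'': insert $B_{\tau}=X_{\tau}F^T_{\tau}$ into (\ref{algrel1_r})--(\ref{algrel2_r}) and strip the full-column-rank factor $X_{\tau}$ on the left and $X^{\dag}_{\tau}$ on the right, which is legitimate exactly because $\mathrm{rank}\,X_{\tau}=s$. Your extra care with the (not necessarily orthogonal) projector $I-X_{\tau}X^{+}_{\tau}$ and the back-substitution showing $\alpha_{\tau}+\alpha^{\dag}_{\tau}=0$ supplies the details the paper omits; no gap.
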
 
\begin{proof}
The proof is straightforward from (\ref{algrel1_r}) and (\ref{algrel2_r}).
\end{proof}

\begin{rem}
It is clear that for $\mathcal{E} = I$ (\ref{fquad}) leads to the trivial result $F_{\tau} = 0$ and $\mathcal{G}$ becomes
equal to $I$. In order to obtain a non-trivial dressing procedure we shall require that $\mathcal{E} \neq I$.
\end{rem}

The relation (\ref{algrel2_ra}) is viewed as a linear system for $X_{\tau}$. In the simplest case when $\mathcal{G}$ has a single
pole $\mu\in\bbr$, $X(x,t)$, $F(x,t)$ are column vectors and $\alpha(x,t)$ is a scalar, the equation (\ref{algrel2_ra}) admits the following simple solution
\begin{equation}
X(x,t) = \frac{\mathcal{E}F^*(x,t)}{\alpha(x,t)}\, \cdot
\label{XF_vec2}
\end{equation} 

In order to find $F_{\tau}$ and $\alpha_{\tau}$ we analyze the equation (\ref{facpde1a}) as we did in the generic case. 

\begin{prop}
The matrices $F_{\tau}(x,t)$ and $\alpha_{\tau}(x,t)$, $\tau = 1,2,\ldots, p$ can be expressed through any bare
fundamental solution $\Psi^{(0)}$ and its $\lambda$-derivative defined in the vicinity of the pole $\mu_{\tau}$ through the relations:	
\begin{eqnarray}
&& F^T_{\tau}(x,t) = F^T_{\tau,0}(t)\left[\Psi^{(0)}(x,t,\mu_{\tau})\right]^{-1},\label{F_psi0_a}\\
&& \alpha_{\tau}(x,t) = \alpha_{\tau,0}(t) - F^T_{\tau,0}(t)\left[\Psi^{(0)}(x,t,\mu_{\tau})\right]^{-1} \partial_{\lambda}\Psi^{(0)}(x,t,\mu_{\tau}) K_{\Psi^{(0)}}(\mu_{\tau})\mathcal{E}F^*_{\tau,0}(t).	
\label{alpha}
\end{eqnarray}
Above, $F_{\tau,0}(t)$ and $\alpha_{\tau,0}(t)$ are constants of integration while the matrix $K_{\Psi^{(0)}}(\lambda)$
is introduced as follows:
\[K_{\Psi^{(0)}}(\lambda) = \left[\Psi^{(0)}(x,t,\lambda)\right]^{-1}\mathcal{E}\left\{\left[\Psi^{(1)}(x,t,\lambda^*)\right]^{\dag}\right\}^{-1}\mathcal{E}.\]
\label{prop_Falpha}
\end{prop}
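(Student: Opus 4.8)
The plan is to mimic the proof of Proposition \ref{prop_F}, but now carrying one more order in the Laurent expansion around the real pole $\mu_\tau$, which is exactly where the extra unknown $\alpha_\tau$ enters. First I would write $\mathcal{G}$ and $\mathcal{G}^{-1}$ from (\ref{gansatz_r}) and (\ref{g_inv_r}) and substitute into (\ref{facpde1a}), rewritten as $\lambda S^{(1)}\mathcal{G} = \rmi\partial_x\mathcal{G} + \lambda\mathcal{G}S^{(0)}$, and then multiply on the right by $\mathcal{G}^{-1}$ to obtain the analogue of (\ref{facpde1b}). Since $\mathcal{G}$ and $\mathcal{G}^{-1}$ now share the pole $\mu_\tau$, the product $\mathcal{G}^{-1}$ appearing on the right has a simple pole there, so the bracket $\rmi\partial_x\mathcal{G} + \lambda\mathcal{G}S^{(0)}$ multiplied by $\mathcal{G}^{-1}$ must be regular at $\mu_\tau$ for $S^{(1)}$ to be $\lambda$-independent times a matrix. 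Extracting the coefficient of $(\lambda-\mu_\tau)^{-1}$ in this product and setting it to zero gives, after using the decomposition (\ref{BXF}) and the algebraic relations (\ref{algrel1_r})--(\ref{algrel2_ra}), an equation of the form $\rmi\partial_x F^T_\tau + \mu_\tau F^T_\tau S^{(0)} = 0$ together with a companion equation governing $\partial_x\alpha_\tau$. The first of these integrates immediately to (\ref{F_psi0_a}), exactly as in Proposition \ref{prop_F}.

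The second step is to pin down the equation for $\alpha_\tau$. Here I expect to differentiate the relation (\ref{algrel2_ra}), $\Omega_\tau\mathcal{E}F^*_\tau = X_\tau\alpha_\tau$, with respect to $x$ and feed in the already-known $x$-evolution of $F_\tau$ (hence of $X_\tau$ via (\ref{XF_vec2}) or (\ref{algrel2_ra}) itself) and of $B_\gamma$. Alternatively — and this is cleaner — I would expand (\ref{facpde1b}) to \emph{second} order at $\mu_\tau$: the $(\lambda-\mu_\tau)^{-2}$ coefficient reproduces (\ref{fquad})/(\ref{algrel1_r}), while the $(\lambda-\mu_\tau)^{-1}$ coefficient, after substituting (\ref{BXF}) and peeling off $X_\tau$ on the left and the appropriate right factor, produces precisely a first-order linear ODE in $x$ for $\alpha_\tau$ whose inhomogeneity involves $F^T_\tau$, $S^{(0)}$ and $\partial_\lambda\Psi^{(0)}(\mu_\tau)$. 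Recognizing that inhomogeneous term as a total $x$-derivative of the expression on the right-hand side of (\ref{alpha}) — using $\rmi\partial_x\Psi^{(0)} = \lambda S^{(0)}\Psi^{(0)}$ and its $\lambda$-derivative $\rmi\partial_x\partial_\lambda\Psi^{(0)} = S^{(0)}\Psi^{(0)} + \lambda S^{(0)}\partial_\lambda\Psi^{(0)}$ — then gives (\ref{alpha}) upon integration, with $\alpha_{\tau,0}(t)$ the constant of integration. I would also check that skew-Hermiticity of $\alpha_\tau$ is preserved by this flow, which follows from the pseudo-Hermiticity symmetry (\ref{facsym}) and the definition of $K_{\Psi^{(0)}}$.

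The last ingredient is to identify the combination $\bigl[\Psi^{(0)}(\mu_\tau)\bigr]^{-1}\partial_\lambda\Psi^{(0)}(\mu_\tau)\,K_{\Psi^{(0)}}(\mu_\tau)\mathcal{E}$ as exactly what shows up when one writes $X_\tau$ in terms of $F_\tau$ and then $\Omega_\tau\mathcal{E}F^*_\tau$ in terms of bare solutions; the matrix $K_{\Psi^{(0)}}(\lambda)$ is built precisely so that $\mathcal{E}\{[\Psi^{(1)}(\lambda^*)]^\dagger\}^{-1}\mathcal{E} = \Psi^{(0)}(\lambda) K_{\Psi^{(0)}}(\lambda)$, which converts the $\mathcal{E}B^\dagger\mathcal{E}$-type terms in $\mathcal{G}^{-1}$ into $\Psi^{(0)}$-language and makes the bookkeeping close. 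I would invoke Proposition~4 (the symmetry (\ref{fs1_sym})) here.

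The main obstacle, I expect, is purely organizational rather than conceptual: keeping track of the double pole at $\mu_\tau$ shared by $\mathcal{G}$ and $\mathcal{G}^{-1}$, and correctly isolating the $(\lambda-\mu_\tau)^{-1}$ coefficient without dropping cross terms between the $B_\tau/(\lambda-\mu_\tau)$ piece and the regular part $\Omega_\tau$. The appearance of $\partial_\lambda\Psi^{(0)}$ in (\ref{alpha}) is the tell-tale sign that this second-order term is unavoidable, and verifying that the resulting inhomogeneous ODE is integrable in closed form — i.e. that its right-hand side really is an exact $x$-derivative — is the one spot where a genuine (short) computation is needed rather than a structural argument.
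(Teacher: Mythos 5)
Your proposal follows essentially the same route as the paper: Laurent-expand (\ref{facpde1b}) at the double pole $\mu_\tau$, obtain $\rmi\partial_x F^T_\tau + \mu_\tau F^T_\tau S^{(0)} = 0$ from the $(\lambda-\mu_\tau)^{-2}$ coefficient and a first-order ODE $\rmi\partial_x\alpha_\tau + F^T_\tau S^{(0)}\mathcal{E}F^*_\tau = 0$ from the residue, then integrate the latter by recognizing $\left[\Psi^{(0)}\right]^{-1}S^{(0)}\Psi^{(0)}$ as the exact derivative $\rmi\partial_x\left(\left[\Psi^{(0)}\right]^{-1}\partial_\lambda\Psi^{(0)}\right)$, exactly as you describe. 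The only slip is attributional: the $(\lambda-\mu_\tau)^{-2}$ coefficient of (\ref{facpde1b}) yields the $F_\tau$ equation itself (the relation (\ref{fquad}) is obtained separately from $\mathcal{G}\mathcal{G}^{-1}=I$ and is merely used to peel off the right factor), while the $\alpha_\tau$ equation comes from the residue.
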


\begin{proof}
First, we shall derive (\ref{F_psi0_a}). For that purpose we evaluate the matrix coefficients before
$(\lambda - \mu_{\tau})^{-2}$ in the equation (\ref{facpde1b}) as given below:
\begin{eqnarray*}
0 & = &\lim_{\lambda\to\mu_{\tau}}(\lambda - \mu_{\tau})^2\left[\rmi\partial_x \mathcal{G} + \lambda\mathcal{G}S^{(0)}\right]\mathcal{G}^{-1}
 = \left[\rmi\partial_xB_{\tau} + \mu_{\tau} B_{\tau} S^{(0)}\right]\mathcal{E} B^{\dag}_{\tau}\mathcal{E}\\
&\stackrel{(\ref{BXF}), (\ref{fquad})}{=}& X_{\tau}\left[\rmi\partial_xF^T_{\tau} + \mu_{\tau} F^T_{\tau}S^{(0)}\right]  \mathcal{E}F^*_{\tau} X^{\dag}_{\tau}\mathcal{E}\qquad \Rightarrow\qquad \rmi\partial_xF^T_{\tau} + \mu_{\tau} F^T_{\tau}S^{(0)} = 0 .
\end{eqnarray*}
The latter equation gives exactly (\ref{F_psi0_a}) after an elementary integration.

Similarly, evaluating the residues in (\ref{facpde1b}) at $\mu_{\tau}$ we have
\begin{eqnarray*}
0 & = & \lim_{\lambda\to\mu_{\tau}}\frac{\partial}{\partial \lambda}\left[(\lambda - \mu_{\tau})^2\left(\rmi\partial_x \mathcal{G} + \lambda\mathcal{G}S^{(0)}\right)\mathcal{G}^{-1}\right] \\
& = & \left(\rmi\partial_xB_{\tau} + \mu_{\tau}B_{\tau} S^{(0)}\right)\mathcal{E}\Omega^{\dag}_{\tau}\mathcal{E}
 + \left(\rmi\partial_x\Omega_{\tau} + \mu_{\tau}\Omega_{\tau}S^{(0)}\right)\mathcal{E}B^{\dag}_{\tau}\mathcal{E}
+ B_{\tau}S^{(0)}\mathcal{E}B^{\dag}_{\tau}\mathcal{E} \\
& \stackrel{(\ref{BXF})}{=} & \left(\rmi\partial_xX_{\tau} F^T_{\tau}  + \rmi X_{\tau} \partial_xF^T_{\tau} + \mu_{\tau} X_{\tau}F^T_{\tau}S^{(0)}\right)
\mathcal{E}\Omega^{\dag}_{\tau}\mathcal{E} + \rmi\partial_x(\Omega_{\tau} \mathcal{E}F^*_{\tau})X^{\dag}_{\tau}\mathcal{E}\\
& - & \rmi\Omega_{\tau}\partial_x(\mathcal{E}F^*_{\tau})X^{\dag}_{\tau}\mathcal{E}
+ \mu_{\tau}\Omega_{\tau}S^{(0)}\mathcal{E}F^*_{\tau}X^{\dag}_{\tau}\mathcal{E} + X_{\tau}F^T_{\tau}S^{(0)}\mathcal{E}F^T_{\tau}X^{\dag}_{\tau}\mathcal{E}\\
& \stackrel{(\ref{algrel2_ra}), (\ref{F_psi0_a})}{=} & X_{\tau}\left(\rmi\partial_x\alpha_{\tau} + F^T_{\tau}S^{(0)}\mathcal{E}F^*_{\tau}\right)X^{\dag}_{\tau}\mathcal{E} \qquad\Rightarrow\qquad \rmi\partial_x\alpha_{\tau} + F^T_{\tau}S^{(0)}\mathcal{E}F^*_{\tau} = 0.
\end{eqnarray*}
The latter equation is integrated and gives (\ref{alpha}).
\end{proof}

What remains is to obtain the $t$-dependence of $F_{\tau,0}$ and $\alpha_{\tau,0}$. A proposition that is analogous to
Proposition \ref{trecov1} holds. 
\begin{prop}
The time dependence of dressed solution associated with a real pole dressing factor is recovered in accordance with 
the rules:
\begin{eqnarray}
F^T_{\tau,0} &\to & F^T_{\tau, 0}\,\rme^{-\rmi f(\mu_{\tau})t},\qquad \tau = 1,2,\ldots, p \ ,\label{F0_cor_a}\\
\alpha_{\tau,0} &\to & \alpha_{\tau,0} - \rmi F^T_{\tau,0}(t)\left. 
\frac{\rmd f(\lambda)}{\rmd\lambda}\right|_{\lambda = \mu_{\tau}}K_{\Psi^{(0)}}(\mu_{\tau}) \mathcal{E}F^*_{\tau,0}(t)t\ .
\label{alpha0_cor}
\end{eqnarray}
\end{prop}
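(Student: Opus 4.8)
The plan is to repeat, almost verbatim, the two-step argument used for Propositions~\ref{prop_Falpha} and~\ref{trecov1}, but starting from the $t$-part~(\ref{facpde2a}) of the equations obeyed by the dressing factor rather than from its $x$-part~(\ref{facpde1b}). The left-hand side of the second line of~(\ref{facpde2a}) is a polynomial in $\lambda$, hence regular at $\mu_{\tau}$; since $\mathcal{G}$ and $\mathcal{G}^{-1}$ in~(\ref{gansatz_r}),~(\ref{g_inv_r}) each have a simple pole there, the right-hand side can have neither a second-order pole nor a residue at $\mu_{\tau}$. Setting the coefficient of $(\lambda-\mu_{\tau})^{-2}$ and the residue at $\mu_{\tau}$ equal to zero produces two relations which, after inserting the decomposition~(\ref{BXF}), the algebraic constraints~(\ref{fquad}),~(\ref{algrel2_ra}) and the already-established formulas~(\ref{F_psi0_a}),~(\ref{alpha}), collapse onto ordinary differential equations in $t$ for $F^T_{\tau,0}$ and $\alpha_{\tau,0}$.

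First I would extract the coefficient of $(\lambda-\mu_{\tau})^{-2}$ in $-\rmi\partial_t\mathcal{G}\,\mathcal{G}^{-1}+\mathcal{G}\sum_k V^{(0)}_k\lambda^k\mathcal{G}^{-1}$. Exactly as in the first half of the proof of Proposition~\ref{prop_Falpha} (and as in Proposition~\ref{trecov1}), using $B_{\tau}=X_{\tau}F^T_{\tau}$ and $F^T_{\tau}\mathcal{E}F^*_{\tau}=0$, this reduces to $\rmi\partial_tF^T_{\tau}-F^T_{\tau}\sum_k V^{(0)}_k\mu^k_{\tau}=0$, i.e.\ to~(\ref{FV0}). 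Substituting~(\ref{F_psi0_a}) and using the bare auxiliary problem~(\ref{bareaux2}) together with the definition of the dispersion law~(\ref{ghf_disp}) gives $\rmi\partial_tF^T_{\tau,0}-F^T_{\tau,0}f(\mu_{\tau})=0$, which integrates to the replacement rule~(\ref{F0_cor_a}).

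Next I would compute $\lim_{\lambda\to\mu_{\tau}}\partial_{\lambda}\bigl[(\lambda-\mu_{\tau})^2(-\rmi\partial_t\mathcal{G}+\mathcal{G}\sum_k V^{(0)}_k\lambda^k)\mathcal{G}^{-1}\bigr]=0$, following the second half of the proof of Proposition~\ref{prop_Falpha}. Expanding $\mathcal{G}$, $\mathcal{G}^{-1}$ and the $\partial_{\lambda}$ about $\mu_{\tau}$, all the terms of the shape $X_{\tau}(\cdots)X^{\dag}_{\tau}\mathcal{E}$ that are free of $\alpha_{\tau}$ cancel by virtue of~(\ref{BXF}),~(\ref{fquad}),~(\ref{algrel2_ra}) and the equation for $F^T_{\tau}$ just obtained; what remains, with the sign fixed by the conventions of~(\ref{facpde2a}), is $\rmi\partial_t\alpha_{\tau}-F^T_{\tau}\bigl(\partial_{\lambda}\sum_k V^{(0)}_k\lambda^k\bigr)\big|_{\lambda=\mu_{\tau}}\mathcal{E}F^*_{\tau}=0$. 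The only structural novelty with respect to the $x$-computation is the derivative $\partial_{\lambda}V^{(0)}$ in place of $V^{(0)}$ itself, produced by $\partial_{\lambda}$ acting on the explicit polynomial $\sum_k V^{(0)}_k\lambda^k$. Substituting~(\ref{F_psi0_a}) and~(\ref{alpha}), the $x$-dependence arranges itself so that the part not involving $\alpha_{\tau,0}(t)$ reproduces the already-solved $x$-equation $\rmi\partial_x\alpha_{\tau}+F^T_{\tau}S^{(0)}\mathcal{E}F^*_{\tau}=0$ of Proposition~\ref{prop_Falpha}, leaving the pure $t$-equation $\rmi\partial_t\alpha_{\tau,0}=F^T_{\tau,0}(t)\,\frac{\rmd f(\lambda)}{\rmd\lambda}\big|_{\lambda=\mu_{\tau}}K_{\Psi^{(0)}}(\mu_{\tau})\mathcal{E}F^*_{\tau,0}(t)$; once $F^T_{\tau,0}(t)$ is replaced by $F^T_{\tau,0}\rme^{-\rmi f(\mu_{\tau})t}$ and one uses that under the pseudo-Hermitian reduction $f(\mu_{\tau})$ is a real diagonal matrix (so that $\rme^{-\rmi f(\mu_{\tau})t}$ recombines with its conjugate), the right-hand side is $t$-independent and integration yields~(\ref{alpha0_cor}).

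I expect the main obstacle to be the bookkeeping in the residue computation of the previous paragraph: one has to check that every contribution coming from differentiating the $(\lambda-\mu_{\tau})^2$ prefactor and from the $O\bigl((\lambda-\mu_{\tau})^0\bigr)$ parts of $\mathcal{G}$, $\mathcal{G}^{-1}$ and $\sum_k V^{(0)}_k\lambda^k$ either telescopes away using~(\ref{fquad}),~(\ref{algrel2_ra}) and the first-order equation for $F^T_{\tau}$, or assembles precisely into the single term $F^T_{\tau}(\partial_{\lambda}V^{(0)})|_{\mu_{\tau}}\mathcal{E}F^*_{\tau}$; and one must verify that the $t$- and $x$-equations obtained for $\alpha_{\tau,0}$ are mutually consistent, which is guaranteed by the zero-curvature condition $[L^{(\sigma)},A^{(\sigma)}]=0$. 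A secondary point to settle is that $K_{\Psi^{(0)}}(\mu_{\tau})$ enters the final answer in a combination that is effectively time-independent, so that the $t$-integration really does produce the term strictly linear in $t$ displayed in~(\ref{alpha0_cor}).
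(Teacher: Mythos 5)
Your proposal is correct and follows essentially the same route as the paper: the coefficient of $(\lambda-\mu_{\tau})^{-2}$ in (\ref{facpde2a}) yields $\rmi\partial_tF^T_{\tau} - F^T_{\tau}\sum_k V^{(0)}_k\mu^k_{\tau}=0$ and hence (\ref{F0_cor_a}), while the residue (the $\partial_{\lambda}$ of the regularized expression) yields $\rmi\partial_t\alpha_{\tau} - F^T_{\tau}\sum_k kV^{(0)}_k\mu^{k-1}_{\tau}\mathcal{E}F^*_{\tau}=0$ and hence (\ref{alpha0_cor}), exactly as in the paper's proof. Your closing remarks about the $\partial_{\lambda}V^{(0)}$ term assembling into $\rmd f/\rmd\lambda$ and about the effective $t$-independence of the integrand are precisely the "simple but somewhat tedious calculations" the paper leaves implicit.
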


\begin{proof}
Like in the proof of the Proposition \ref{trecov1}, we evaluate the matrix coefficients before $(\lambda - \mu_{\tau})^{-2}$
in the latter equation in (\ref{facpde2a}) as follows:
\begin{eqnarray*}
& 0 = &\lim_{\lambda\to\mu_{\tau}}(\lambda - \mu_{\tau})^2\left[-\rmi\partial_t \mathcal{G} + \mathcal{G}\sum^{M}_{k = 1} V^{(0)}_k  \lambda^{k} \right]\mathcal{G}^{-1} = \left[-\rmi\partial_tB_{\tau} + B_{\tau}\sum^{M}_{k = 1} V^{(0)}_k  \mu^{k}_{\tau}  \right]\mathcal{E} B^{\dag}_{\tau}\mathcal{E}\\
& \stackrel{(\ref{BXF}), (\ref{fquad})}{=} & X_{\tau}\left[-\rmi\partial_tF^T_{\tau} + F^T_{\tau}\sum^{M}_{k = 1} V^{(0)}_k  \mu^{k}_{\tau}\right]\mathcal{E}F^*_{\tau} X^{\dag}_{\tau}\mathcal{E}\quad\Rightarrow\quad \rmi\partial_tF^T_{\tau} - F^T_{\tau} \sum_{k}V^{(0)}_k\mu^k_{\tau} = 0.
\end{eqnarray*}
After taking into account (\ref{F_psi0_a}), we reduce the above equation to
\begin{equation}
\rmi\partial_tF^T_{\tau, 0} - F^T_{\tau, 0} f(\mu_{\tau}) = 0,
\label{Ftau0_de}\end{equation}
which immediately leads to an exponential dependence of $F_{\tau,0}$, namely to the rule (\ref{F0_cor_a}).

In a similar fashion, we calculate the matrix residues in (\ref{facpde2a}), namely:
\begin{eqnarray*}
0 & = & \lim_{\lambda\to\mu_{\tau}}\frac{\partial}{\partial \lambda}\left\{(\lambda - \mu_{\tau})^2\left[-\rmi\partial_t \mathcal{G} + \mathcal{G}\sum^{M}_{k = 1} V^{(0)}_k  \lambda^{k} \right]\mathcal{G}^{-1}\right\}\\
& = &\left[-\rmi\partial_tB_{\tau} + B_{\tau} \sum^{M}_{k = 1} V^{(0)}_k  \mu^{k}_{\tau} \right]\mathcal{E}\Omega^{\dag}_{\tau}\mathcal{E} + \left[-\rmi\partial_t\Omega_{\tau} + \Omega_{\tau}\sum^{M}_{k = 1} V^{(0)}_k  \mu^{k}_{\tau}\right]\mathcal{E}B^{\dag}_{\tau}\mathcal{E}\\
& + & B_{\tau}\sum^{M}_{k = 1} k V^{(0)}_k  \mu^{k-1}_{\tau}\mathcal{E}B^{\dag}_{\tau}\mathcal{E}
\stackrel{(\ref{BXF}), (\ref{algrel2_ra}), (\ref{Ftau0_de})}{=} - X_{\tau}\left[\rmi\partial_t\alpha_{\tau} - F^T_{\tau}\sum^{M}_{k = 1} kV^{(0)}_k  \mu^{k-1}_{\tau}\mathcal{E}F^*_{\tau}\right]X^{\dag}_{\tau}\mathcal{E}.
\end{eqnarray*}
Hence we derive the equation
\begin{equation}
\rmi\partial_t\alpha_{\tau} - F^T_{\tau}\sum^{M}_{k = 1} kV^{(0)}_k \mu^{k-1}_{\tau}\mathcal{E}F^*_{\tau} = 0 .
\label{alpha_pde}
\end{equation}
After substituting (\ref{F_psi0_a}), (\ref{alpha}) and (\ref{Ftau0_de}) into (\ref{alpha_pde}) and then perform simple but
somewhat tedious calculations, we get
\[\rmi\partial_t\alpha_{\tau,0} - F^T_{\tau,0}\left. \frac{\rmd f(\lambda)}{\rmd\lambda}\right|_{\lambda = \mu_{\tau}}K_{\Psi^{(0)}}(\mu_{\tau})\mathcal{E}F^*_{\tau,0} = 0.\]
This equation is integrated to give rise to the correspondence (\ref{alpha0_cor}).
\end{proof}

\section{Special Solutions}\label{solutions} 

In this section we shall present our main results, namely construction of special solutions to the generalized Heisenberg
ferromagnet equation (\ref{ghf}). The construction will be based on the procedure discussed in the previous section.
As we have already mentioned, the calculations crucially depend on the location of dressing factor poles. This is why there are two essentially different cases: soliton-like solutions that correspond to poles in generic position and quasi-rational solutions associated with real poles. We shall start with solutions of the soliton type.

\subsection{Soliton Solutions}
\label{solitons}

In this subsection we shall require that dressing factor has simple poles with non-zero imaginary parts. In fact, we shall
be interested in a dressing factor with a single pole $\mu = \omega + \rmi \kappa$. Without loss of generality
we can assume that $\omega >0$ and $\kappa>0$. In view of (\ref{g_inv}) the inverse of $\mathcal{G}$ has a single pole
$\mu^* = \omega - \rmi \kappa$. Then the following theorem holds true.

\begin{theo}
The simplest soliton-like solution to the generalized Heisenberg ferromagnet equation (\ref{ghf}) related to the Lie algebra $\mathfrak{su}(m, n-m)$, i.e., the soliton-like solution connected to a single pair of discrete eigenvalues $\omega \pm \rmi \kappa$ of the scattering operator, is given by:	
\begin{equation}
\begin{split}
S^{(1)}_{ij}(x,t) &= \sum^{r}_{k=1} \left\{\delta_{ik} + \frac{2\rmi \kappa \varepsilon_ih_{ik}(x,t)f_{ik}}{(\omega -\rmi\kappa)\left[
\rme^{-2\kappa (x - 2u_{+}t)}g_{+} + \rme^{-\frac{8\omega\kappa rb t}{n}}g_{0} + \rme^{2\kappa (x - 2u_{-}t)}g_{-}\right]}\right\}\\
& \times \left\{\delta_{kj} - \frac{2\rmi \kappa \varepsilon_kh^*_{jk}(x,t)f^*_{jk}}{(\omega +\rmi\kappa)\left[
\rme^{-2\kappa (x - 2u_{+}t)}g_{+} + \rme^{-\frac{8\omega\kappa rb t}{n}}g_{0} + \rme^{2\kappa (x - 2u_{-}t)}g_{-}\right]}\right\}\\
& - \sum^{n}_{k=n-r+1} \left\{\delta_{ik} + \frac{2\rmi \kappa \varepsilon_ih_{ik}(x,t)f_{ik}}{(\omega -\rmi\kappa)\left[
\rme^{-2\kappa (x - 2u_{+}t)}g_{+} + \rme^{-\frac{8\omega\kappa rb t}{n}}g_{0} + \rme^{2\kappa (x - 2u_{-}t)}g_{-}\right]}\right\}\\
& \times \left\{\delta_{kj} - \frac{2\rmi \kappa \varepsilon_kh^*_{jk}(x,t)f^*_{jk}}{(\omega +\rmi\kappa)\left[
\rme^{-2\kappa (x - 2u_{+}t)}g_{+} + \rme^{-\frac{8\omega\kappa rb t}{n}}g_{0} + \rme^{2\kappa (x - 2u_{-}t)}g_{-}\right]}\right\}.
\end{split}
\label{gen_ghfsol}\end{equation} 
Above, $h(x,t)$ is a Hermitian matrix defined through the relations:
\begin{equation}
\begin{split}
h_{ik}(x,t) &= \left\{ \begin{array}{l}
\rme^{-2\kappa(x - 2u_{+}t)}, \qquad i,k = 1,2,\ldots, r;\\
\rme^{-\rmi [\omega x - (\omega^2 - \kappa^2)(a+b) t]}\rme^{-\kappa\left[x - 2\omega\left(a + b\frac{n-4r}{n}\right) t\right]}, \quad i = 1,2,\ldots, r,\\ k= r+1,r+2,\ldots, n-r; \\
\rme^{-2\rmi [\omega x - a(\omega^2 - \kappa^2)t]}\rme^{\frac{4\omega\kappa b(n-2r)t}{n}}, \quad i = 1,2,\ldots, r,\\
k= n-r+1,n-r+2,\ldots, n; \\
\rme^{- \frac{8\omega\kappa rb t}{n}},\qquad  i,k = r+1,r+2,\ldots, n-r;\\
\rme^{-\rmi [\omega x - (\omega^2 - \kappa^2)(a-b) t]}\rme^{\kappa\left[x - 2\omega\left(a - b\frac{n-4r}{n}\right) t\right]},
\quad i = r+1,r+2,\ldots, n-r,\\
k= n-r+1,n-r+2,\ldots, n; \\
\rme^{2\kappa(x - 2u_{-}t)}, \qquad i,k = n-r+1,n-r+2,\ldots, n;\\
\end{array}\right.
\end{split} 
\label{hik}
\end{equation}
$f$ is another Hermitian matrix with constant elements that parametrize the solution, $g_{\pm}, g_0\in\bbr$ and 
\begin{equation}
u_{\pm} = \omega\left(a \pm b \frac{n-2r}{n}\right).
\label{upm}\end{equation}
We assume that $(g_{+}, g_{0}, g_{-}) \neq (0,0,0)$.
\label{ghf_sol}\end{theo}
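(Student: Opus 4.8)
The plan is to instantiate the general dressing machinery of Section~\ref{prelim} with $p=1$, $s=1$, a single complex pole $\mu=\omega+\rmi\kappa$, and the \emph{trivial} (constant) seed solution $S^{(0)}(x,t)\equiv\Sigma$ from (\ref{boundcon}). First I would write down a fundamental solution of the bare linear problems. Since $U^{(0)}=-\lambda\Sigma$ and $V^{(0)}=\lambda^2\bigl[a\Sigma+b(\Sigma^2-\tfrac{2r}{n}I)\bigr]$ are constant and commute, one simply has
\[
\Psi^{(0)}(x,t,\lambda)=\exp\!\Bigl(-\rmi\lambda\Sigma x-\rmi\lambda^2\bigl[a\Sigma+b(\Sigma^2-\tfrac{2r}{n}I)\bigr]t\Bigr),
\]
which is diagonal and block-structured along the three blocks of sizes $r$, $n-2r$, $r$ on which $\Sigma$ takes the values $1,0,-1$. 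Then Proposition~\ref{prop_F} gives $F^T(x,t)=F^T_0\bigl[\Psi^{(0)}(x,t,\mu)\bigr]^{-1}$, and Proposition~\ref{trecov1} fixes the $t$-dependence of $F_0$ via $f(\mu)$ in (\ref{ghf_disp}). Carrying out the matrix exponential componentwise on the three blocks and collecting the scalar exponents produces exactly the entries $h_{ik}(x,t)$ written in (\ref{hik}); the mixed (off-diagonal-block) entries acquire the oscillatory factors $\rme^{-\rmi[\omega x-\dots\,t]}$ from $\omega=\re\mu$ while the real exponents $\rme^{\pm\kappa(x-2u_\pm t)}$ and $\rme^{-8\omega\kappa rbt/n}$ come from $\kappa=\im\mu$ together with the block values of $a\Sigma+b(\Sigma^2-\tfrac{2r}{n}I)$; this is where the velocities $u_\pm$ in (\ref{upm}) are read off.

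Next I would reconstruct $\mathcal{G}_\infty$ and hence $S^{(1)}$. From the ansatz (\ref{gansatz}) with $p=1$ one has $\mathcal{G}_\infty=I+B/\mu=I+XF^T/\mu$, and by (\ref{s1s0rel}), $S^{(1)}=\mathcal{G}_\infty\Sigma\mathcal{G}_\infty^{-1}$. The vector $X$ is given explicitly by (\ref{XF_vec}),
\[
X=\frac{\mu(\mu-\mu^*)\,\mathcal{E}F^*}{\mu^*F^T\mathcal{E}F^*}=\frac{2\rmi\kappa\,\mathcal{E}F^*}{(\omega-\rmi\kappa)\,F^T\mathcal{E}F^*},
\]
so that $B/\mu=XF^T/\mu$ has entries proportional to $\varepsilon_i(F^*)_i(F^T)_k/\bigl[(\omega-\rmi\kappa)F^T\mathcal{E}F^*\bigr]$. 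Writing $F_i=\sqrt{f_{ii}}\,(\text{phase})\cdot(\text{block exponential})$ — more precisely absorbing the constant Hermitian parametrizing data into $f_{ik}=(F_0)_i^*(F_0)_k$ — one gets $(F^*)_i(F^T)_k\propto h_{ik}f_{ik}$ (using $h$ Hermitian and the factorization of $h_{ik}$ as a product of a function of $i$ and a function of $k$, which is exactly what (\ref{hik}) encodes), while the scalar denominator $F^T\mathcal{E}F^*=\sum_j\varepsilon_j|F_j|^2$ collapses, after grouping the three blocks, to the three-term combination $\rme^{-2\kappa(x-2u_+t)}g_++\rme^{-8\omega\kappa rbt/n}g_0+\rme^{2\kappa(x-2u_-t)}g_-$ with $g_\pm,g_0$ the (real, by pseudo-Hermiticity) block-sums of $\varepsilon_j f_{jj}$. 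Substituting $\mathcal{G}_\infty=I+B/\mu$ and $\mathcal{G}_\infty^{-1}=I+\mathcal{E}B^\dag\mathcal{E}/\mu^*$ (from (\ref{g_inv}) with $p=1$) into $S^{(1)}=\mathcal{G}_\infty\Sigma\mathcal{G}_\infty^{-1}$, and using $\Sigma=\sum_{k=1}^r E_{kk}-\sum_{k=n-r+1}^n E_{kk}$ to split the matrix product, yields precisely the two bracketed factors summed over $k=1,\dots,r$ minus the same over $k=n-r+1,\dots,n$ displayed in (\ref{gen_ghfsol}); the cross term $B\Sigma\mathcal{E}B^\dag\mathcal{E}/(\mu\mu^*)$ is handled by the algebraic relation (\ref{algrel1}) (equivalently $X^\dag\mathcal{E}\Sigma\mathcal{E}$ being proportional to $X^\dag\mathcal{E}$), which is what makes the expression factor as a product rather than leaving a residual quadratic-in-$B$ piece.

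Finally I would check the pseudo-Hermiticity $\mathcal{E}[S^{(1)}]^\dag\mathcal{E}=S^{(1)}$ — immediate from (\ref{facsym}) and Proposition~3 — and the algebraic constraint $[S^{(1)}]^3=S^{(1)}$, which is inherited from $[S^{(0)}]^3=S^{(0)}$ under the conjugation (\ref{s1s0rel}). I would also note that $S^{(1)}\to\Sigma$ as $x\to\pm\infty$ because one of the two real exponentials $\rme^{\pm2\kappa x}$ dominates the denominator and forces $B/\mu\to0$ there, confirming (\ref{boundcon}) for the dressed solution and the claimed exponential approach to the constant background. The main obstacle is bookkeeping rather than conceptual: one must carefully verify that the phase and amplitude parts of each $F_i$ combine so that $(F^*)_i(F^T)_k$ reproduces $h_{ik}f_{ik}$ with \emph{exactly} the exponents in (\ref{hik}) — in particular getting the coefficients $2\omega(a+b\tfrac{n-4r}{n})$, $4\omega\kappa b(n-2r)/n$, etc.\ right requires tracking the contribution of the $-\tfrac{2r}{n}bI$ shift in the dispersion law across all three blocks — and that the denominator genuinely reduces to a three-term (not richer) quasi-polynomial, which relies on $f$ being Hermitian so that the within-block sums are real and the between-block contributions cancel in $F^T\mathcal{E}F^*$. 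Once these identifications are pinned down, substitution into $S^{(1)}=\mathcal{G}_\infty\Sigma\mathcal{G}_\infty^{-1}$ and simplification give (\ref{gen_ghfsol}) directly.
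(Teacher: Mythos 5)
Your proposal is correct and follows essentially the same route as the paper: dressing the constant seed $S^{(0)}=\Sigma$ with a single rank-one complex-pole factor, computing $F$ from $\Psi^{(0)}$ and the dispersion law, $X$ from (\ref{XF_vec}), and substituting $\mathcal{G}_\infty=I+B/\mu$ into $S^{(1)}=\mathcal{G}_\infty\Sigma\mathcal{G}_\infty^{-1}$. The only cosmetic differences are that the paper takes $\Psi^{(0)}=\rme^{-\rmi\lambda\Sigma x}$ and restores the $t$-dependence afterwards via (\ref{F0_cor}) rather than exponentiating $V^{(0)}$ at the outset, and that your appeal to (\ref{algrel1}) for the ``cross term'' is unnecessary --- the product form (\ref{s1s0rel2}) already contains the quadratic-in-$B$ piece automatically because $\Sigma$ is diagonal.
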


\begin{proof}
In order to derive (\ref{gen_ghfsol}), we start from the bare solution
\begin{equation}
S^{(0)}(x,t) = \Sigma = \left(\begin{array}{ccc}
I_r & 0 & 0 \\ 0 & 0 & 0 \\ 0 & 0 & -I_r
\end{array}\right) ,
\label{baresol_ghf0}
\end{equation}
whose fundamental solution can be picked up to be
\begin{equation}
\Psi^{(0)}(x,t,\lambda) = \rme^{-\rmi\lambda \Sigma x}.
\label{barefas_ghf0}
\end{equation}
From this point on we shall assume that the residue $B(x,t)$ of the dressing factor (\ref{gansatz}) is a
matrix of rank $1$. In view of (\ref{F_psi0}) and (\ref{barefas_ghf0}) the vector $F(x,t)$ introduced
in the decomposition (\ref{BXF}) acquires the form
\begin{equation}
F(x,t) = \rme^{\rmi\mu\Sigma x} F_0(t),
\end{equation}
which written in components gives:
\begin{equation}
\begin{split}
F_i(x,t) &= \rme^{\rmi\mu x} F_{0,i}(t) = \rme^{\rmi\omega x}\rme^{-\kappa x}F_{0,i}(t),\qquad i=1,2,\ldots, r;\\
F_i(x,t) &=  F_{0,i}(t),\qquad i=r+1,r+2,\ldots, n-r;\\
F_i(x,t) &= \rme^{-\rmi\mu x} F_{0,i}(t) = \rme^{-\rmi\omega x}\rme^{\kappa x}F_{0,i}(t),\qquad i=n-r+1,n-r+2,\ldots,n.
\end{split}
\label{F_sol}
\end{equation}
According to (\ref{ghf_disp}) and (\ref{F0_cor}) the time evolution in (\ref{F_sol}) is recovered through the correspondences:
\begin{equation}
\begin{split}
F_{0,i} &\to F_{0,i}\exp\left[-\rmi\mu^2\left(a + \frac{n-2r}{n}b\right)t\right],\qquad i=1,2,\ldots, r ;\\
F_{0,i} &\to F_{0,i}\exp\left(\frac{2\rmi\mu^2rb}{n} t\right),\qquad i=r+1,r+2,\ldots, n-r ;\\
F_{0,i} &\to F_{0,i}\exp\left[-\rmi\mu^2\left(-a + \frac{n-2r}{n}b\right)t\right],\qquad i=n-r+1,n-r+2,\ldots, n.
\end{split}
\label{F_solcor}\end{equation}
Taking into account (\ref{BXF}) and (\ref{XF_vec}), for the matrix elements of $B(x,t)$ we have 
\begin{equation}
B_{ik}(x,t) = \frac{(\omega + \rmi\kappa)2\rmi\kappa \varepsilon_iF^{*}_i(x,t)F_k(x,t)}{(\omega - \rmi\kappa) F^T(x,t)\mathcal{E}F^{*}(x,t)}\;
\cdot
\label{B_sol}\end{equation}
After substituting (\ref{F_sol}) and (\ref{F_solcor}) into the denominator of (\ref{B_sol}), we obtain
\begin{equation}
F^T(x,t)\mathcal{E}F^*(x,t) = \rme^{-2\kappa (x - 2u_{+}t)} g_{+} + \rme^{-\frac{8\omega\kappa rb}{n}t} g_{0} + \rme^{2\kappa (x - 2u_{-}t)}g_{-} \ ,
\label{FEF}\end{equation}
where $u_{\pm}$ are the same as in (\ref{upm}) and we have introduced the notations 
\[g_{+} = \sum^r_{k=1} \varepsilon_k |F_{0,k}|^2 ,\qquad
g_{0} = \sum^{n-r}_{k=r+1}\varepsilon_k|F_{0,k}|^2, \qquad g_{-} = \sum^{n}_{k=n-r+1}\varepsilon_k|F_{0,k}|^2 \, .\]
On other hand, for the numerator of $B(x,t)$ (\ref{F_sol}) and (\ref{F_solcor}) lead to 
\begin{equation}
F^*_i(x,t)F_k(x,t) = h_{ik}(x,t)f_{ik},\qquad f_{ik} = F^*_{0,i}F_{0,k}\ ,
\label{FiFk}\end{equation}
where the functions $h_{ik}$ are defined in (\ref{hik}). After substituting (\ref{B_sol}), (\ref{FEF}) and (\ref{FiFk})
into the interrelation (\ref{s1s0rel}) between bare and dressed solutions, which can be written as
\begin{equation}
\begin{split}
S^{(1)}_{ij}(x,t) &= \sum^{r}_{k=1} \left(\delta_{ik} + \frac{B_{ik}(x,t)}{\mu}\right)\left(\delta_{kj} + \frac{\varepsilon_{k}\varepsilon_{j}B^*_{jk}(x,t)}{\mu^*}\right) \\
& - \sum^{n}_{k=n-r+1} \left(\delta_{ik} + \frac{B_{ik}(x,t)}{\mu}\right)\left(\delta_{kj} + \frac{\varepsilon_{k}\varepsilon_{j}B^*_{jk}(x,t)}{\mu^*}\right),
\end{split}
\label{s1s0rel2}\end{equation}
one finally gets (\ref{gen_ghfsol}).
\end{proof}

It is seen that the obtained solution (\ref{gen_ghfsol}) is not a traveling wave. Moreover, its denominator
may have zeros, hence and the solution may be singular. 

Since the equation (\ref{hf}) represents a reduction of the generalized Heisenberg ferromagnet equation its solutions are particular cases of those to (\ref{ghf}). Nevertheless, due to its high importance by itself we shall formulate the corresponding results for the Heisenberg ferromagnet equation as separate statements.
Thus, we have the following theorem.

\begin{theo}
The simplest soliton-like solution to the Heisenberg ferromagnet equation (\ref{hf}) related to $\mathfrak{su}(m, n-m)$ has the form:
\begin{equation}
\begin{split}
S^{(1)}_{ij}(x,t) &= 2\sum_{k=1}^{r}\left\{\delta_{ik} + \frac{2\rmi\kappa\varepsilon_i h_{ik}f_{ik}}{(\omega - \rmi \kappa)\left[g_{+} \rme^{-2\kappa (x - 4\omega t) } + g_{-}\rme^{2\kappa(x - 4\omega t)}\right]}\right\}\times \\
& \left\{\delta_{kj} - \frac{2\rmi\kappa\varepsilon_k h^*_{jk}f^*_{jk}}{(\omega + \rmi \kappa)\left[g_{+} \rme^{-2\kappa (x - 4\omega t) } + g_{-} \rme^{2\kappa(x - 4\omega t)}\right]}\right\} - \delta_{ij} \ ,
\end{split}
\label{gen_hfsol}\end{equation}
where $f_{ik}\in\bbc$, $f_{ki} = f^*_{ik}$, $i,k = 1,2,\ldots,n$, $g_{+}$ and $g_{-}$ are real numbers such that $(g_{+}, g_{-}) \neq (0,0)$, $h$ is defined through the relations:
\begin{eqnarray*}
h_{ik}(x,t) = 
\left\{ \begin{array}{cl}
\rme^{-2\kappa(x-4\omega t)}, & i,k = 1,2,\ldots r;\\
\rme^{-2\rmi [\omega x - 2(\omega^2-\kappa^2)t]}, & i = 1,2,\ldots r,\quad k= r+1,r+2,\ldots 2r; \\
\rme^{2\kappa(x-4\omega t)}, & i,k = r+1,r+2,\ldots 2r
\end{array}\right. 
\end{eqnarray*}
and the condition that $h_{ki} = h^*_{ik}$.
\label{hf_sol}
\end{theo}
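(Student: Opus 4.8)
The plan is to deduce Theorem~\ref{hf_sol} from Theorem~\ref{ghf_sol} by specializing the latter to the reduction of (\ref{ghf}) that produces the genuine Heisenberg ferromagnet equation (\ref{hf}), namely the case of a non-singular spin field $S^2=I$. First I would record the elementary algebraic fact that for $S^2=I$ one has $\{S,S_x\}=0$ and $S_x^2=-\tfrac12(S_{xx}S+SS_{xx})$, hence $[S^2,S_x]=0$ and $(S^2S_xS)_x=S_{xx}S+S_x^2=\tfrac12[S_{xx},S]$, so that (\ref{ghf}) collapses to $\rmi S_t=\tfrac{a}{4}[S,S_{xx}]$; the choice $a=2$ then reproduces (\ref{hf}) exactly, and the coefficient $b$ drops out completely. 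At the level of the boundary condition (\ref{boundcon}) this reduction is precisely the case $n=2r$: the middle zero block of $\Sigma$ disappears, $\Sigma^2=I$, and the bare fundamental solution $\rme^{-\rmi\lambda\Sigma x}$ is unchanged.

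With this identification in hand I would set $n=2r$, $a=2$ in every formula of the proof of Theorem~\ref{ghf_sol}. From (\ref{upm}) one gets $u_+=u_-=2\omega$; the index range $k=r+1,\dots,n-r$ is empty, so $g_0=0$ and the denominator in (\ref{gen_ghfsol}) becomes $g_+\rme^{-2\kappa(x-4\omega t)}+g_-\rme^{2\kappa(x-4\omega t)}$; and in (\ref{hik}) every row and column belonging to the suppressed middle block vanishes while the remaining $b$-dependent exponents disappear because $n-2r=0$, leaving exactly the three-case matrix $h$ of the statement. Equivalently, one may rerun the dressing construction of Section~\ref{prelim} directly with $S^{(0)}=\Sigma=\diag(I_r,-I_r)$, building $F(x,t)$ from (\ref{F_psi0}) and its time dependence from (\ref{F0_cor}) with the dispersion law $f(\lambda)=2\lambda^2\Sigma$ read off from (\ref{ghf_disp}) at $a=2$, $b=0$; this yields the same residue $B(x,t)$ and the same $S^{(1)}=\mathcal{G}_\infty\Sigma\mathcal{G}_\infty^{-1}$.

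The only step that is not a verbatim specialization is the rewriting of $S^{(1)}_{ij}=\sum_{k=1}^{r}(\cdots)-\sum_{k=n-r+1}^{n}(\cdots)$ coming from (\ref{s1s0rel2}) into the form $2\sum_{k=1}^{r}(\cdots)-\delta_{ij}$ displayed in (\ref{gen_hfsol}). To justify it I would observe that the two bracketed factors occurring in (\ref{gen_hfsol}) are literally the matrix entries $(\mathcal{G}_\infty)_{ik}=\delta_{ik}+B_{ik}/\mu$ and $(\mathcal{G}_\infty^{-1})_{kj}=\delta_{kj}+(\mathcal{E}B^{\dag}\mathcal{E})_{kj}/\mu^{*}$ (cf. (\ref{gansatz}) and (\ref{g_inv})), where one uses $F^{*}_iF_k=h_{ik}f_{ik}$ and the reality of $F^{T}\mathcal{E}F^{*}$ to match the explicit expressions. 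Consequently their product summed over the \emph{full} range $k=1,\dots,2r$ equals $(\mathcal{G}_\infty\mathcal{G}_\infty^{-1})_{ij}=\delta_{ij}$, so that $\sum_{k=n-r+1}^{n}(\cdots)=\delta_{ij}-\sum_{k=1}^{r}(\cdots)$ and the claimed closed form follows. The Hermiticity relations $f_{ki}=f^{*}_{ik}$ and $h_{ki}=h^{*}_{ik}$ are immediate from $f_{ik}=F^{*}_{0,i}F_{0,k}$ and $h_{ik}f_{ik}=F^{*}_iF_k$.

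I expect the only real work to be bookkeeping: checking that, upon putting $n=2r$, every occurrence of the suppressed middle block and every $b$-dependent exponent in (\ref{gen_ghfsol}) and (\ref{hik}) drops out consistently, and that the identity $\mathcal{G}_\infty\mathcal{G}_\infty^{-1}=I$ indeed reproduces the collapse of the two sums for the explicit entries written there. There is no analytic obstacle beyond these routine verifications.
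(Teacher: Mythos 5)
Your proposal is correct and follows essentially the same route as the paper, which likewise obtains (\ref{gen_hfsol}) by rerunning the single-pole dressing of Theorem \ref{ghf_sol} with the seed $\Sigma_3=\diag(I_r,-I_r)$ (the $n=2r$ case, where the dispersion law (\ref{ghf_disp}) collapses to $2\lambda^2\Sigma_3$). Your explicit verification that $S^2=I$ forces $a=2$, $b$ irrelevant, and your justification of the rewriting $\sum_{k=1}^{r}-\sum_{k=r+1}^{2r}=2\sum_{k=1}^{r}-\delta_{ij}$ via $\mathcal{G}_\infty\mathcal{G}_\infty^{-1}=I$ supply details the paper leaves implicit, but do not change the argument.
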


\begin{proof}
The scheme of the proof coincides with that of Theorem \ref{ghf_sol}. One starts from the bare
solution to the equation (\ref{hf}) 
\begin{equation}
S^{(0)}(x,t) = \Sigma_3 = \left(\begin{array}{cc}
I_r & 0 \\ 0 & - I_r
\end{array}\right),
\label{baresol_hf}
\end{equation}
whose fundamental solution is chosen in the form
\begin{equation}
\Psi^{(0)}(x,t,\lambda) = \rme^{-\rmi\lambda \Sigma_3 x}.
\label{barefas_hf}\end{equation}
Applying a single pole dressing factor to (\ref{s1s0rel}), (\ref{XF_vec}), (\ref{F_psi0}), (\ref{F0_cor})
and taking into account that the dispersion law of (\ref{hf}) is
\begin{equation}
f(\lambda) = 2\lambda^2 \Sigma_3\ ,
\label{hf_disp}\end{equation}
one derives the solution (\ref{gen_hfsol}).
\end{proof}

Let us consider in more detail the simplest case of Heisenberg ferromagnet equation related to either $\mathfrak{su}(2)$
or $\mathfrak{su}(1,1)$, i.e., we have that $n=2$ and $r=1$. We shall require that $g_{+}g_{-}\neq 0$. Then the solution (\ref{gen_hfsol}) simplifies to
\begin{equation}
\begin{split}
S^{(1)}_{11}(x,t) &= - S^{(1)}_{22}(x,t) = 1- \frac{8\kappa^2\varepsilon_1\varepsilon_2}{(\omega^2 + \kappa^2)\left[\rme^{\xi(x,t)} + \varepsilon_1\varepsilon_2\rme^{-\xi(x,t)}\right]^2}\ ,\\ 
S^{(1)}_{12}(x,t) &= \varepsilon_1\varepsilon_2 \left[S^{(1)}_{21}(x,t)\right]^* \\
& = \frac{4\kappa}{(\omega^2 + \kappa^2)\left[\rme^{-\xi(x,t)} + \varepsilon_1\varepsilon_2\rme^{\xi(x,t)}\right]}\left[\frac{2\kappa\cos \varphi(x,t)}{1 + \varepsilon_1\varepsilon_2\rme^{2\xi(x,t)}} + \kappa \cos\varphi(x,t)\right.\\
& \left. + \omega \sin\varphi(x,t) 
- \frac{2\rmi\kappa\sin \varphi(x,t)}{1 + \varepsilon_1\varepsilon_2\rme^{2\xi(x,t)}} +\rmi \omega\cos\varphi(x,t) - \rmi \kappa \sin\varphi(x,t)\right],
\end{split}
\label{gen_hfsol1}
\end{equation}
where 
\begin{eqnarray*}
\xi(x,t) &=& 2\kappa(x - 4\omega t) + \frac{1}{2}\ln\frac{f_{22}}{f_{11}} \ ,\\
\varphi(x,t) &=& 2\left[\omega x - 2(\omega^2 - \kappa^2)t + \frac{\arg f_{21}}{2}\right].
\end{eqnarray*}
Equivalently, the soliton type solution (\ref{gen_hfsol1}) can be written down in a vector form by making use of the equalities:
\[ S^{(1)}_{11} = S^{(1)}_3,\qquad \re S^{(1)}_{12} = S^{(1)}_1,\qquad \im S^{(1)}_{12} = - S^{(1)}_2.\]
This way we obtain the following result:
\[\begin{split}
S^{(1)}_{1}(x,t) & = \frac{4\kappa \left[\frac{2\kappa\cos \varphi(x,t)}{1 + \varepsilon_1\varepsilon_2\rme^{2\xi(x,t)}}
+ \kappa \cos\varphi(x,t) + \omega \sin\varphi(x,t) \right]}{(\omega^2 + \kappa^2)\left[\rme^{-\xi(x,t)} + \varepsilon_1\varepsilon_2\rme^{\xi(x,t)}\right]}\, ,\\
S^{(1)}_{2}(x,t) & =\frac{4\kappa \left[\frac{2\kappa\sin \varphi(x,t)}{1 + \varepsilon_1\varepsilon_2\rme^{2\xi(x,t)}} - \omega\cos\varphi(x,t) + \kappa \sin\varphi(x,t)\right]}{(\omega^2 + \kappa^2)\left[\rme^{-\xi(x,t)} + \varepsilon_1\varepsilon_2\rme^{\xi(x,t)}\right]}\, ,\\
S^{(1)}_{3}(x,t) & = 1- \frac{8\kappa^2\varepsilon_1\varepsilon_2}{(\omega^2 + \kappa^2)\left[\rme^{\xi(x,t)} + \varepsilon_1\varepsilon_2\rme^{-\xi(x,t)}\right]^2}\, \cdot
\end{split}\]

Next, we shall consider the case when the poles of dressing factor are all real, i.e., the discrete eigenvalues of
the scattering operator now lie on its continuous spectrum. This degeneracy in the spectrum leads to certain
"degeneracy" in the soliton type solutions --- the complex exponential functions turn into the product of oscillating
exponents and polynomials. Thus, the soliton solutions being the ratio of such exponents become the ratio of quasi-polynomials, that is quasi-rational functions of $x$ and $t$.

\subsection{Quasi-rational Solutions}
\label{quasi-rational}

Like in the generic case we shall focus here on a dressing factor with a single pole $\mu >0$. Due to (\ref{g_inv_r})
the inverse of dressing factor has a simple pole at the same point. Then the following statement holds true.

\begin{theo}
The simplest quasi-rational solution to the generalized Heisenberg ferromagnet equation (\ref{ghf}) for the pseudo-unitary
algebra associated with a single double discrete eigenvalue $\mu >0$ is given by the $n\times n$-matrix:
\begin{equation}
\begin{split}
S^{(1)}_{ij}(x,t) = \sum_{k=1}^{r}\left[\delta_{ik} - \frac{\rmi\varepsilon_i h_{ik}(x,t)f_{ik}}{\mu g_{+} (x - 2v_{+} t) - \mu g_{-} (x - 2v_{-}t)}\right]\\
\times\left[\delta_{kj} + \frac{\rmi\varepsilon_kh^*_{jk}(x,t)f^*_{jk}}{\mu g_{+} (x - 2v_{+} t) - \mu g_{-} (x - 2v_{-}t)}\right]\\
 -\sum_{k=n-r+1}^{n}\left[\delta_{ik} - \frac{\rmi\varepsilon_i h_{ik}(x,t)f_{ik}}{\mu g_{+} (x - 2v_{+} t) - \mu g_{-}(x - 2v_{-}t)}\right]\\
\times\left[\delta_{kj} + \frac{\rmi\varepsilon_kh^*_{jk}(x,t)f^*_{jk}}{\mu g_{+} (x - 2v_{+} t) - \mu g_{-} (x - 2v_{-}t)}\right],
\end{split}
\label{quasi_ghfsol}
\end{equation}
where $h_{ik}(x,t)$ are elements of a Hermitian matrix defined through:	
\begin{equation}
h_{ik}(x,t) = \left\{\begin{array}{rl}
\rme^{-\rmi\mu(x- v_{+}t)}, & i = 1,2,\ldots, r,\quad k = r+1,r+2,\ldots,n-r;\\
\rme^{-2\rmi\mu(x - a\mu t)}, & i = 1,2,\ldots, r, \ k = n-r+1, n-r+2, \ldots, n;\\
\rme^{-\rmi\mu(x - v_{-}t)}, & i = r+1, r+2, \ldots, n-r, \\ & k = n-r+1, n-r+2, \ldots, n;\\
1, & \mbox{otherwise}.
\end{array}\right. 
\label{hik2}
\end{equation}
The constant $n\times n$-matrix $f$, $f^{\dag} = f$, $g_{\pm}\in\bbr$, $(g_{+}, g_{-}) \neq (0,0)$ and $v_{\pm} = \mu (a \pm b)$ parametrize the solution.
\label{ghf_quasi}
\end{theo}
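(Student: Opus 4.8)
The plan is to mirror the derivation of Theorem \ref{ghf_sol}, but now using the real-pole machinery of Proposition \ref{prop_Falpha} and the associated time-evolution rules. First I would start from the same constant background $S^{(0)}(x,t) = \Sigma$ of (\ref{baresol_ghf0}), whose fundamental solution is $\Psi^{(0)}(x,t,\lambda) = \rme^{-\rmi\lambda\Sigma x}$ as in (\ref{barefas_ghf0}). I would take the residue $B(x,t)$ of the real-pole dressing factor (\ref{gansatz_r}) to have rank $1$, so that $B = XF^T$ with $X,F$ column vectors. Plugging $\Psi^{(0)}$ into (\ref{F_psi0_a}) gives $F(x,t) = \rme^{\rmi\mu\Sigma x}F_0(t)$, which in components is $F_i = \rme^{\rmi\mu x}F_{0,i}$ for $i\le r$, $F_i = F_{0,i}$ for $r<i\le n-r$, and $F_i = \rme^{-\rmi\mu x}F_{0,i}$ for $i > n-r$; note that here $\mu$ is real, so these are genuine oscillatory (not decaying) exponentials. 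The constraint (\ref{fquad}), $F^T\mathcal{E}F^* = \sum_j \varepsilon_j|F_j|^2 = 0$, is then a single quadratic condition on the constants $F_{0,j}$ which must be imposed; this is where $\mathcal{E}\neq I$ is essential.

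Next I would compute $\alpha(x,t)$ from (\ref{alpha}). Since $\Psi^{(0)}$ is exponential, $\partial_\lambda\Psi^{(0)}(x,t,\mu) = -\rmi\Sigma x\,\rme^{-\rmi\mu\Sigma x}$, and $K_{\Psi^{(0)}}(\mu)$ reduces (using $S^{(1)}\to\Sigma$ and the explicit bare solution) to a simple diagonal-type expression; the upshot is that $\alpha(x,t)$ becomes an affine function of $x$ whose coefficients are bilinear forms in $F_0$ weighted by the $\varepsilon_j$ and by $\Sigma$. Concretely I expect $\alpha(x,t) \propto x\big(\sum_{k\le r}\varepsilon_k|F_{0,k}|^2 - \sum_{k> n-r}\varepsilon_k|F_{0,k}|^2\big) + \text{const}$, i.e. proportional to $g_+ x - g_- x$ plus lower-order terms, which is exactly the structure of the denominator $\mu g_+(x-2v_+t) - \mu g_-(x-2v_-t)$ appearing in (\ref{quasi_ghfsol}). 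Then $X = \mathcal{E}F^*/\alpha$ from (\ref{XF_vec2}), and $B_{ik} = X_i F_k = \varepsilon_i F_i^* F_k/\alpha$.

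The $t$-dependence I would recover from (\ref{F0_cor_a}) and (\ref{alpha0_cor}) using the dispersion law (\ref{ghf_disp}) evaluated at $\lambda = \mu$: the rule (\ref{F0_cor_a}) shifts $F_{0,i}$ by phases $\rme^{-\rmi f_{ii}(\mu)t}$ which combine with the spatial exponentials to produce the arguments $x - v_\pm t$ (with $v_\pm = \mu(a\pm b)$) and $x - a\mu t$ recorded in (\ref{hik2}); since only phase factors are introduced, $g_\pm = \sum_{k\le r}\varepsilon_k|F_{0,k}|^2$ etc. stay constant. The rule (\ref{alpha0_cor}) adds to $\alpha_{\tau,0}$ a term linear in $t$ with coefficient $-\rmi F_{0}^T f'(\mu) K_{\Psi^{(0)}}(\mu)\mathcal{E}F_0^*$; combining the $x$-linear and $t$-linear pieces produces precisely the combination $g_+(x-2v_+t) - g_-(x-2v_-t)$ in the denominator. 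Finally I substitute $B_{ik}$ and its conjugate into the interrelation (\ref{s1s0rel}) written in the form (\ref{s1s0rel2}) to obtain (\ref{quasi_ghfsol}), with $f_{ik} = F_{0,i}^*F_{0,k}$ and $h_{ik}$ collecting the residual oscillatory phases.

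The main obstacle I anticipate is the bookkeeping in the time-evolution step: one must check that the $x$-linear term in $\alpha$ (from (\ref{alpha})) and the $t$-linear term (from (\ref{alpha0_cor})) assemble into the single travelling-coordinate combination $g_+(x - 2v_+ t) - g_-(x - 2v_- t)$ with matching coefficients, and in particular that the contribution of the middle block $r<k\le n-r$ drops out of the denominator (it contributes only to the numerator phases $h_{ik}$). Verifying this requires carefully evaluating $K_{\Psi^{(0)}}(\mu)$ and $f'(\mu)$ and tracking which diagonal entries of $\Sigma$ (namely $+1$, $0$, $-1$) enter each bilinear form; the $0$-block is what makes the middle terms disappear from $\alpha$. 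Everything else is a direct, if somewhat tedious, specialization of the general real-pole formulas.
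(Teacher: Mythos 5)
Your proposal follows essentially the same route as the paper: same constant background $\Sigma$ and exponential $\Psi^{(0)}$, rank-one residue, $F$ from (\ref{F_psi0_a}), $\alpha$ assembled from (\ref{alpha}) and (\ref{alpha0_cor}), $B_{ik}=\varepsilon_iF_i^*F_k/\alpha$ via (\ref{XF_vec2}), and substitution into (\ref{s1s0rel2}); this is exactly the paper's argument. One small refinement: in the $t$-linear part of $\alpha$ the middle block does contribute through the $b\,(\Sigma^2-\tfrac{2r}{n}I)$ term of the dispersion law, and it is the constraint (\ref{fquad}) (i.e.\ $g_++g_0+g_-=0$), rather than the vanishing of the $0$-block of $\Sigma$ alone, that converts the coefficients into $v_\pm=\mu(a\pm b)$ --- you correctly flagged that constraint earlier, so this is only a bookkeeping remark.
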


\begin{proof}
In constructing the above quasi-rational solution we once again use (\ref{baresol_ghf0}) and (\ref{barefas_ghf0}) 
as a bare solution and a bare fundamental solution respectively. Like in the soliton case, we shall restrict ourselves
with a dressing factor whose residue is a matrix of rank $1$. Then equation (\ref{F_psi0_a}) again leads to the expression (\ref{F_sol}) for the column vector $F(x,t)$ appearing in the decomposition (\ref{BXF}). For the components of the residue of dressing factor the equation (\ref{XF_vec2}) reads:
\begin{equation}
B_{ik}(x,t) = \frac{\varepsilon_iF^*_i(x,t)F_k(x,t)}{\alpha (x,t)} \ ,\qquad i, k = 1,2,\ldots, n. 
\label{B_quasi}
\end{equation} 
The $t$-dependence of $F(x,t)$ is recovered from (\ref{F_solcor}) as in the soliton case while (\ref{ghf_disp}) and (\ref{alpha0_cor}) lead to the correspondence:
\begin{equation}
\alpha_0 \to \alpha_0 - 2\rmi\mu t F^T_0\left[a\Sigma + b \left(\Sigma^2 - \frac{2r}{n}I\right)\right]\mathcal{E} F^*_0
\label{alpha_solcor}\end{equation} 
for the scalar function $\alpha$. Taking into account (\ref{alpha}), (\ref{F_sol}), (\ref{F_solcor}) and (\ref{alpha_solcor}), for the denominator in (\ref{B_quasi}) we get
\begin{equation}
\alpha(x,t) = \rmi (x - 2v_{+} t) g_{+} - \rmi (x - 2v_{-} t) g_{-},
\label{alpha_quasi}
\end{equation}
where we have introduced the parameters
\[v_{\pm} = \mu (a \pm b), \qquad g_{+} = \sum^{r}_{k=1} \varepsilon_k |F_{0,k}|^2, \qquad g_{-} = \sum^{n}_{k=n-r+1} \varepsilon_k |F_{0,k}|^2.\]
On other hand, (\ref{F_sol}) and (\ref{F_solcor}) allow one to write 
\begin{equation}
F^*_i(x,t)F_k(x,t) = h_{ik}(x,t)f_{ik},\qquad f_{ik} = F^*_{0,i}F_{0,k} ,
\label{FiFk2}\end{equation}
where the matrix elements $h_{ik}(x,t)$, $i,k =1,2,\ldots, n$ are the same as in (\ref{hik2}). After substituting (\ref{B_quasi}), (\ref{alpha_quasi}) and
(\ref{FiFk2}) into the interrelation between bare and dressed solutions in the form (\ref{s1s0rel2}), one derives (\ref{quasi_ghfsol}).
\end{proof}

Like the soliton type solutions discussed in the previous subsection, the quas-rational solution (\ref{quasi_ghfsol}) is
not a traveling wave and it may exhibit (pole) singularities. Theorem \ref{ghf_quasi} has its analogue for the Heisenberg ferromagnet equation (\ref{hf}).

\begin{theo}
The simplest quasi-rational solution to the equation (\ref{hf}) for $\mathfrak{su}(m, n-m)$ is associated with a double discrete eigenvalue $\mu >0$ and looks as follows:
\begin{equation}
\begin{split}
S^{(1)}_{ij}(x,t) & = 2\sum_{k=1}^{r}\left(\delta_{ik} - \frac{\rmi\varepsilon_i h_{ik}f_{ik}}{2g\mu (x - 4\mu t)}\right)\left(\delta_{kj}
+ \frac{\rmi\varepsilon_kh^*_{jk}f^*_{jk}}{2g\mu (x - 4\mu t)}\right) - \delta_{ij}\ ,
\end{split}
\label{quasi_hfsol}\end{equation}
where $g$ is a real number ($g\neq 0$); $f_{ik} \in\bbc$, $f_{ki} = f^*_{ik}$, $i,k = 1,2, \ldots, n$, and $h$ is a Hermitian matrix defined through:
\[h_{ik}(x,t) = \left\{\begin{array}{ll}
1, & i,k = 1,2,\ldots r ;\\
\rme^{-2\rmi\mu(x - 2\mu t)}, & i = 1,2,\ldots, r, \quad k = r+1, r+2, \ldots, 2r ;\\
1, & i,k = r+1,r+2,\ldots 2r.
\end{array}\right.\]
\end{theo}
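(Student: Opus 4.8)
The proof parallels that of Theorem \ref{ghf_quasi}, the only changes being the seed solution and the dispersion law, exactly as Theorem \ref{hf_sol} was extracted from Theorem \ref{ghf_sol} in the soliton case. The plan is to run the real--pole dressing of Subsection \ref{quasi-rational} on the Heisenberg ferromagnet equation. First I would take the constant seed $S^{(0)}(x,t) = \Sigma_3$ of (\ref{hf}) with bare fundamental solution $\Psi^{(0)}(x,t,\lambda) = \rme^{-\rmi\lambda\Sigma_3 x}$ and dress it with a single real pole $\mu > 0$ carrying a rank--one residue, so that $B = XF^T$ with $X,F$ column vectors and $\alpha$ a scalar. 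Then Proposition \ref{prop_Falpha} gives $F$ and $\alpha$ through $\Psi^{(0)}(x,t,\mu)$ and its $\lambda$--derivative, (\ref{XF_vec2}) gives $X$, and the time dependence is reinstated from (\ref{F0_cor_a}) and (\ref{alpha0_cor}) with the Heisenberg dispersion law (\ref{hf_disp}), namely $f(\lambda) = 2\lambda^2\Sigma_3$; the dressed solution then follows from (\ref{s1s0rel}) written in components, as in (\ref{s1s0rel2}).

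Next I would evaluate $F$ explicitly. Substituting $\Psi^{(0)} = \rme^{-\rmi\lambda\Sigma_3 x}$ into (\ref{F_psi0_a}) and applying (\ref{F0_cor_a}) yields $F_i(x,t) = \rme^{\rmi\mu(x - 2\mu t)}F_{0,i}$ on the upper block $i\le r$ and $F_i(x,t) = \rme^{-\rmi\mu(x - 2\mu t)}F_{0,i}$ on the lower block $i>r$. Hence $F^*_i(x,t)F_k(x,t) = h_{ik}(x,t)f_{ik}$ with $f_{ik} = F^*_{0,i}F_{0,k}$, where the phases cancel on the two diagonal blocks (giving $h_{ik} = 1$) and add up on the off--diagonal block (giving $h_{ik} = \rme^{-2\rmi\mu(x - 2\mu t)}$); this is exactly the matrix $h$ in the statement, and $h^{\dag} = h$ by construction.

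The key computation is $\alpha$. From the $x$--equation $\rmi\partial_x\alpha + F^T S^{(0)}\mathcal{E}F^* = 0$ derived in the proof of Proposition \ref{prop_Falpha}, and using $|F_j| = |F_{0,j}|$, one gets $F^T\Sigma_3\mathcal{E}F^* = g_{+} - g_{-}$ with $g_{\pm}$ the sum of $\varepsilon_k|F_{0,k}|^2$ over the upper/lower block. The quadratic constraint (\ref{fquad}), which here reads $F^T\mathcal{E}F^* = g_{+} + g_{-} = 0$, forces $g_{-} = -g_{+} =: -g$, so $\rmi\partial_x\alpha = -2g$. In the same way, the $t$--rule (\ref{alpha0_cor}) with $\rmd f/\rmd\lambda|_{\mu} = 4\mu\Sigma_3$ and with $K_{\Psi^{(0)}}(\mu)$ acting as the identity between $F^T_0$ and $\mathcal{E}F^*_0$ (the same reduction that produced (\ref{alpha_solcor}) in the generalized case) contributes the linear $t$--term, so altogether $\alpha(x,t) = 2\rmi g\,(x - 4\mu t)$ up to an irrelevant additive constant.

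Finally I would substitute $B_{ik}/\mu = \varepsilon_i F^*_i F_k/(\mu\alpha) = -\rmi\varepsilon_i h_{ik}f_{ik}/[2g\mu(x - 4\mu t)]$ and its pseudo--Hermitian conjugate into the component form of (\ref{s1s0rel}); since $n = 2r$ the sum over $k$ splits into $\sum_{k=1}^{r} - \sum_{k=r+1}^{2r}$, and because $\mathcal{G}_{\infty}\mathcal{G}_{\infty}^{-1} = I$ the full sum $\sum_{k=1}^{2r}$ collapses to $\delta_{ij}$; eliminating the second partial sum gives $S^{(1)}_{ij} = 2\sum_{k=1}^{r}(\cdots)(\cdots) - \delta_{ij}$, which is (\ref{quasi_hfsol}). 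The hard part will be the bookkeeping inside $\alpha$: one must keep track of several imaginary exponents, invoke the isotropy constraint (\ref{fquad}) at the right step, and confirm the trivial action of $K_{\Psi^{(0)}}(\mu)$ when sandwiched between $F^T_0\,\rmd f/\rmd\lambda|_{\mu}$ and $\mathcal{E}F^*_0$; the remaining algebra is identical to that already carried out for the generalized equation.
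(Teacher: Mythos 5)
Your proposal is correct and follows exactly the route the paper takes (and merely sketches): seed $\Sigma_3$ with $\Psi^{(0)}=\rme^{-\rmi\lambda\Sigma_3x}$, a single real pole $\mu$ with rank-one residue, $F$ and $\alpha$ from Proposition \ref{prop_Falpha} with the dispersion law (\ref{hf_disp}), the isotropy constraint (\ref{fquad}) forcing $g_-=-g_+$ so that $\alpha=2\rmi g(x-4\mu t)$, and the collapse of $\sum_{k=1}^{n}$ to $\delta_{ij}$ via $\mathcal{G}_\infty\mathcal{G}_\infty^{-1}=I$ to produce the factor of $2$ and the $-\delta_{ij}$. Your explicit verification that $K_{\Psi^{(0)}}(\mu)=I$ for this diagonal seed is a detail the paper leaves implicit, and all your signs and exponents check out against (\ref{quasi_hfsol}).
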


\begin{proof}
The proof here resembles that of Theorem \ref{ghf_quasi}. We choose the bare solution of
(\ref{hf}) in the form (\ref{baresol_hf}) and the corresponding fundamental solution as in (\ref{barefas_hf}).
Making use of a single real pole dressing factor to (\ref{s1s0rel}), (\ref{XF_vec2})--(\ref{alpha0_cor}) and (\ref{hf_disp})
we get the dressed solution (\ref{quasi_hfsol}).
\end{proof}

Let us consider now the simplest case of Heisenberg ferromagnet equation for the Lie algebra 
$\mathfrak{su}(1,1)$\footnote{As we discussed in the previous section, we must require that	$\mathcal{E}\neq I$ to obtain a non-trivial result. This is why Heisenberg ferromagnet equation related to $\mathfrak{su}(2)$ is not possible here.}, i.e.,
we have that $n=2$ and $m=r=1$. Then the quasi-rational solution (\ref{quasi_hfsol}) simplifies to
\[\begin{split}
S^{(1)}_{11}(x,t) & = - S^{(1)}_{22}(x,t) = 1 + \frac{1}{2\mu^2 (x - 4\mu t)^2},\\
S^{(1)}_{12}(x,t) & = - \left[S^{(1)}_{21}(x,t)\right]^* =
\left(1 - \frac{\rmi}{2\mu (x - 4\mu t)}\right)
\frac{\rmi\rme^{-\rmi\varphi(x,t)}}{\mu (x - 4\mu t)} \, ,
\end{split}\]
where
\[\varphi(x,t) = 2\left(\mu x - 2\mu^2 t + \frac{\arg f_{21}}{2}\right).\]
Written in a vector form, the above solution reads:
\[\begin{split}	
S^{(1)}_{1}(x,t) & =  \frac{\sin\varphi(x,t)}{\mu (x - 4\mu t)}
+ \frac{\cos\varphi(x,t)}{2\mu^2 (x - 4\mu t)^2} \, ,\\
S^{(1)}_{2}(x,t) & =  \frac{\sin\varphi(x,t)}{2\mu^2 (x - 4\mu t)^2}
- \frac{\cos\varphi(x,t)}{\mu (x - 4\mu t)} \, ,\\
S^{(1)}_{3}(x,t) & = 1 + \frac{1}{2\mu^2 (x - 4\mu t)^2}\, \cdot
\end{split}\]

\section{Conclusion}

Special solutions to an integrable generalization of Heisenberg ferromagnet equation have explicitly been derived.
This generalized Heisenberg ferromagnet equation has a Lax representation related to the pseudo-unitary algebra.
The existence and the form of the Lax pair hints at applying dressing technique with dressing factors being
meromorphic functions in the spectral parameter $\lambda$. More specifically, we have used dressing factors with
simple poles only. Depending on the location of the poles, one distinguishes between two classes of constant background
solutions. The first class consists of soliton-like solutions. Those are special solutions that correspond to complex
poles with non-zero imaginary parts and tend exponentially fast to some constants as $x\to\pm\infty$. The second class
includes quasi-rational solutions which are related to real poles. Those solutions tend polynomially fast to
constants as $x\to\pm\infty$. Both classes consist of solutions that are not traveling waves. Another important feature
is that the solutions may not be defined on the whole $x$-axis but may have certain singularities.

One may extend the results in the current paper by studying generalized Heisenberg ferromagnet equations related
to other simple Lie algebras, e.g. orthogonal or symplectic Lie algebras. Since those equations could be viewed as
reductions of (\ref{ghf}), the corresponding dressing factors can be obtained from the ansatz (\ref{gansatz})
by imposing certain reduction conditions. Generally speaking, the solutions will then be associated with four discrete
eigenvalues of the scattering operator.

We have considered a generalization of Heisenberg ferromagnet equation whose Lax pair represents a linear bundle
in pole gauge. It is plausible to ask for a generalization whose Lax pair is a rational bundle that is invariant
under the transformation $\lambda\to 1\slash\lambda$. Such a NEE was already introduced in \cite{varna2020} and it can be viewed as
an integrable deformation of certain reduction of the generic equation (\ref{ghf}). How to integrate this deformed
NEE through dressing method is still an open question.

We have constructed particular solutions to (\ref{ghf}) when $S$ obeys the simplest asymptotic behavior possible.
Clearly, one may require that $S$ satisfies some other boundary conditions, e.g. (quasi-)periodic or non-trivial
background conditions. Those boundary conditions, however, lead to significantly more complicated scattering theory
for the Lax operators. All this is envisaged to be part of our future work.

\section*{Acknowledgments}

The work has been financially supported by Grant KP--06--N 62/5 of Bulgarian National Science Fund.

\end{document}